\newtheorem{lemma}{Lemma}
\newtheorem{theorem}{Theorem}
\newcommand{\sref}[1]{Section~\ref{#1}}
\newcommand{\appref}[1]{Appendix~\ref{#1}}
\newcommand{\fref}[1]{Figure~\ref{#1}}
\newcommand{\cref}[1]{Constraint~\ref{#1}}
\newcommand{\thref}[1]{Theorem~\ref{#1}}
\newcommand{\lref}[1]{Lemma~\ref{#1}}
\newcommand{\tref}[1]{Table~\ref{#1}}
\newcommand{\ignore}[1]{}
\begin{document}
\doublespacing

\title{Coding Opportunity Densification Strategies \\ for Instantly Decodable Network Coding}
\author{Sameh~Sorour,~\IEEEmembership{Member,~IEEE,}
        Shahrokh~Valaee,~\IEEEmembership{Senior Member,~IEEE}
\thanks{The authors are with the Edward S. Rogers Sr. Department of Electrical and Computer Engineering,
    University of Toronto, 10 King's College Road, Toronto, ON, M5S 3G4, Canada,
    e-mail:\{samehsorour, valaee\}@comm.utoronto.ca.}\thanks{This work has been submitted to the IEEE for possible publication. Copyright may be transferred without notice, after which this version may no longer be accessible.}}

\maketitle

\IEEEoverridecommandlockouts

\begin{abstract}
In this paper, we aim to identify the strategies that can maximize and monotonically increase the density of the coding opportunities in instantly decodable network coding (IDNC).Using the well-known graph representation of IDNC, first derive an expression for the exact evolution of the edge set size after the transmission of any arbitrary coded packet. From the derived expressions, we show that sending commonly wanted packets for all the receivers can maximize the number of coding opportunities. Since guaranteeing such property in IDNC is usually impossible, this strategy does not guarantee the achievement of our target. Consequently, we further investigate the problem by deriving the expectation of the edge set size evolution after ignoring the identities of the packets requested by the different receivers and considering only their numbers. We then employ this expected expression to show that serving the maximum number of receivers having the largest numbers of missing packets and erasure probabilities tends to both maximize and monotonically increase the expected density of coding opportunities. Simulation results justify our theoretical findings. Finally, we validate the importance of our work through two case studies showing that our identified strategy outperforms the step-by-step service maximization solution in optimizing both the IDNC completion delay and receiver goodput.
\end{abstract}
\begin{keywords}
Instantly Decodable Network Coding; Coding Opportunities; Wireless Broadcast; Graph Densification.
\end{keywords}

\section{Introduction} \label{sec:intro}
\emph{Network coding (NC)} \cite{850663} has shown great abilities to substantially improve transmission
efficiency, throughput and delay over broadcast erasure channels. The design of network coding algorithms, optimizing throughput and delay performances
over single-hop broadcast erasure channels, has recently been an intensive area of research
\cite{4895447,4476183,Drinea2009,4313060,Blasiak2011,Sadeghi2010,4397057,\ignore{5152148,}5072357,4397041,Sundararajan2009}. Some of these works have focused on packet selection for coding in each transmission in order to optimize a certain metric, such as in-order delay \cite{Sundararajan2009} or video quality \cite{5072357,4397041}. Other works have focused more on receiver selection in each transmission to optimize another set of parameters, such as completion and decoding delay \cite{Sadeghi2009,Sadeghi2010,ICC10,GC10}. These works have considered a subclass of network coding known as \textit{instantly decodable network coding (IDNC)}, in which coded packets must be decoded at their reception instant and cannot be stored for future decoding. This opportunistic network coding scheme has attracted attention due to its desirable properties, such as fast packet recovery, simple XOR coding and decoding, and no buffer requirements.

In most of these opportunistic network coding and IDNC works, the selection of a coding combination to optimize a desired parameter for a particular transmission does not consider the effect of this selection on resulting \emph{``coding opportunities''} in subsequent transmissions. By a coding opportunity we mean the opportunity of serving two packet requests of two receivers simultaneously by one transmission using network coding. For instance, the proposed online algorithms in \cite{4476183,Sadeghi2009,Drinea2009,Sadeghi2010} and \cite{GC10} have focused on increasing the number of served requests and decoding receivers in each transmission without studying the effect of such approach on the number of remaining coding opportunities for subsequent transmissions. If this selection leads to a very limited number of opportunities, then the sender will no longer be able to send packets that are decodable by many receivers, thus sacrificing the main strength of network coding and its overall performance. Instead, if these algorithms consider the maintenance of a larger number of coding opportunities for subsequent transmissions, they may end-up with an overall better performance.

It was not until very recently, when few works started to give weight to the concept of increasing the coding opportunities in opportunistic network coding \cite{Wang2010,6145511}. In fact, these works have shown that increasing the coding opportunities is of major importance to achieve the capacity in some special cases of erasure channels. Nonetheless, the suggested approaches to increase the coding opportunities in \cite{Wang2010,6145511} are done by buffering non-decodable packets, which makes them not suitable for IDNC. This motivates us to explore the strategy that can play a similar role in IDNC.

To give an example on the importance of maintaining a large number of coding opportunities in IDNC, let us assume a network of 6 receivers $\{r_1,\dots,r_6\}$ that require packets $\{p_1,\dots,p_6\}$, respectively\ignore{ in which receivers $r_1$, $r_2$ and $r_3$ are missing packets $p_1$, $p_2$ and $p_3$, respectively, whereas receiver $r_4$ is missing packets $p_4$, $p_5$ and $p_6$}. Also assume that, due to the side information at the different receivers, the available coding opportunities are $\left\{p_1 \oplus p_2 \oplus p_3, p_1 \oplus p_4, p_2 \oplus p_5, p_3 \oplus p_6\right\}$. This scenario can be presented in the form of the graph depicted in \fref{fig:motivation}.a. Each vertex with indices $(i,j)$ in the graph represent the request of packet $p_j$ by receiver $r_i$. An edge between any two vertices $(i,j)$ and $(k,l)$ represents a coding opportunity given the available side information, such that the coded packet $p_j\oplus p_l$ is decodable at both receivers $r_i$ and $r_k$. Given this representation, the first coding opportunity (i.e. $1\oplus 2\oplus 3$) is represented by the left triangle between the three leftmost vertices, whereas the three other opportunities are represented by the three horizontal edges.
\begin{figure}
  \centering
  \includegraphics[width=0.6\linewidth]{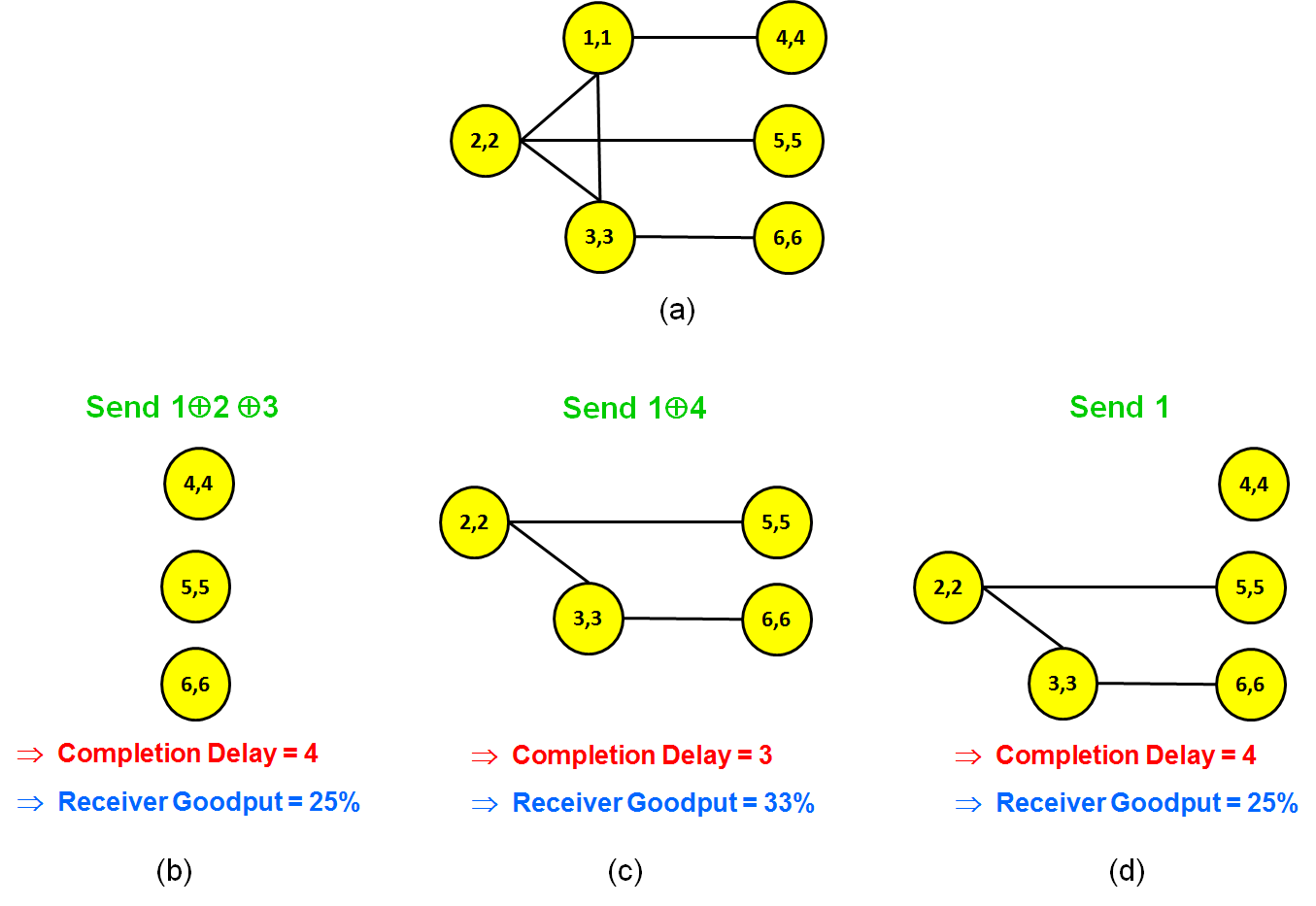}\\
  \caption{Motivating Example}\label{fig:motivation}
\end{figure}
If we follow the philosophy of serving the maximum number of requests in each transmission (similar to the algorithms in \cite{4476183,Sadeghi2009,Drinea2009,Sadeghi2010,GC10}), the sender should first send the packet combination $p_1 \oplus p_2 \oplus p_3$. However, this selection will result in three packets $\{p_4, p_5, p_6\}$ with no coding opportunities between them, as depicted in \fref{fig:motivation}.b. Thus, the sender will require 3 additional transmissions to serve these requests. Defining the \emph{receiver goodput} as the percentage of the sender transmission from which each receiver would benefit, the obtained receive goodput according to the above strategy would be $25\%$. On the other hand, a first transmission serving a smaller number of packet requests (such as $p_1 \oplus p_4$) keeps 3 coding opportunities in the system, as depicted in \fref{fig:motivation}.c, which can be satisfied by only 2 additional transmissions and the resulting goodput will be $33\%$. Consequently, the selection of coded transmissions that preserve a large number of coding opportunities in the system results in a better completion delay and receiver goodput.

Despite its importance illustrated by the above example, the number of coding opportunities may not be expressive in itself. Indeed, a selected transmission may result in a larger number of coding opportunities because it serves very few packet requests, and thus the number of remaining requests will be also large. Thus, this apparently large number of coding opportunities will not be enough to foster efficient combinations between the large number of remaining requests. Consequently, it is important not only to maximize the absolute number of coding opportunities but to mainly maximize their ratio to the number of remaining packet requests. This notion is depicted in Figures \ref{fig:motivation}.c and \fref{fig:motivation}.d. In both cases, the used transmission leads to the same number of remaining coding opportunities. However, the final completion delay and receiver goodput is better for the the case of \fref{fig:motivation}.c as the available number of coding opportunities is serving less number of requests.

Given this observation, we define the \emph{coding density} as the number of actual coding opportunities normalized by the maximum number of coding opportunities that could exist for the same number of packet requests. Consequently, this coding density parameter evaluates the number of coding opportunities with respect to the total number of these requests.\ignore{ In other words, we aim for a modified densification approach from the one in \cite{Pedarsani2008}, in which the number of remaining requests is reduced much faster than the number of opportunities.} In this paper, we aim to answer the following question: \emph{What are the coding strategies in IDNC that can maximize the coding density after each transmission and can result in its continuous increase over the transmission horizon of a frame of broadcast packets?}

To answer this question, we employ the above graph representation of the requests and coding opportunities, and first derive an expression for the exact evolution of the edge set size after the transmission of any arbitrary coded packet. From the derived expressions, we show that targeting commonly wanted packets for all the receivers can maximize the number of coding opportunities. However, guaranteeing such property in IDNC is usually impossible, especially when such packets are exhausted in the first few transmissions. In this case, the performance of this strategy would greatly depend on the receivers served in each transmission, and the exact evolution expression does not give us any intuition regarding this matter. Consequently, we further investigate the problem by deriving the expectation of the edge set size evolution after ignoring the identities of the packets requested by the different receivers and considering only their numbers. From this expression, we show that the best strategy to increase the expected coding density is to serve the maximum number of receivers having the largest number of missing packets and erasure probabilities. We then test both identified strategies and compare them with other well-known IDNC strategies. Finally, we validate the importance of our study and the chosen metrics by presenting two case studies showing the effect of the identified receiver selection strategies on reducing the completion delay and increasing the receiver goodput in IDNC.

It is important to note that this paper is not proposing algorithms to optimize any specific throughput or delay parameter. It is rather a first and independent study of an influential component in network coding, namely the coding density, which has been totally ignored in most works on designing network coding algorithms, despite its clear importance in optimizing long-term parameters compared to per-transmission benefits. The contributions of the paper can be summarized as follows:
\begin{itemize}
\item It derives expressions for the evolution of coding opportunities and density along the transmission of network coded packets.
\item It provides a rigorous analysis of the parameters affecting the evolution of coding opportunities and density.
    \end{itemize}
These contributions can open paths for future efforts in designing more efficient online network coding algorithms, which optimize different performance metrics while taking our coding density analysis into consideration. The paper finally illustrates the importance of coding density by two case studies on completion delay and receiver goodput.

The rest of the paper is organized as follows. In \sref{sec:model}, the system model and parameters are illustrated. We introduce the IDNC graph and our metric of coding density in \sref{sec:IDNC-graph}. In \sref{sec:graph-evolution}, we derive the expression for the exact edge set size evolution and analyze it in \sref{sec:packet-selection-strategy}. We then derive the expected edge set size evolution in \sref{sec:expected-coding-opportunity-evolution} and identify, in \sref{sec:receiver-selection-strategy}, the coding strategy increasing it. Simulation results are illustrated in \sref{sec:simulations}. \sref{sec:case-study} presents two case studies on the effect of our identified strategies optimizing the IDNC completion delay and receiver goodput. Finally, \sref{sec:conclusion} concludes the paper.

\section{System Model and Parameters} \label{sec:model}
The model consists of a wireless sender that is required to deliver a frame of $N$ source packets (denoted by $\mathcal{N}$) to a set of $M$ receivers (denoted by $\mathcal{M}$).
The sender initially transmits the $N$ packets\ignore{ of the frame} uncoded in an \emph{initial transmission phase}. Each sent packet can be successfully received at receiver $i$ with probability $q_i$, which is assumed to be fixed during the frame transmission period. Receivers feed back to the sender a positive one-bit acknowledgement (ACK) for each received packet. Consequently, an overhead of $O(N)$ bits is required for feedback after each transmission. At the end of the initial transmission phase, two sets of packets are attributed to each receiver $i$, representing the feedback state of the network:
\begin{itemize}
\item The \emph{Has} set (denoted by $\mathcal{H}_i$) is defined as the set of packets correctly received by receiver $i$.
\item The \emph{Wants} set (denoted by $\mathcal{W}_i$) is defined as the set of packets that are not yet received by receiver $i$. In other words, $\mathcal{W}_i = \mathcal{N} \setminus \mathcal{H}_i$.
\end{itemize}
The cardinalities of the Has and Wants sets of receiver $i$ are denoted by $\varrho_i$ and $\psi_i$, respectively. After the initial transmission phase, a recovery \ignore{transmission }phase starts, in which the sender exploits the diversity of received\ignore{ and lost} packets to transmit network coded combinations\ignore{ of source packets}. According to the definition of IDNC, these combinations must be either decoded at their reception instant or discarded. The received ACKs at the sender after each transmission are used to update the different sets. This process is repeated until all receivers obtain all the packets.

\section{The IDNC Graph} \label{sec:IDNC-graph}
As depicted in \fref{fig:motivation}, we can represent the receiver requests and all feasible instantly decodable combination among them using a graph, which we will refer to as the IDNC graph.\ignore{ It was first introduced in the context of a heuristic algorithm design solving the index coding problem\ignore{ \cite{4544612,4313060}} \cite{4544612}.} This graph $\mathcal{G}(\mathcal{V},\mathcal{E})$ is constructed by first generating a vertex $v_{ij}$ in $\mathcal{V}$ for each requested packet $j \in \mathcal{W}_i$, $\forall~i\in\mathcal{M}$. Two request vertices $v_{ij}$ and $v_{kl}$ in $\mathcal{G}$ are set adjacent by a coding opportunity edge if one of the following conditions is true:
\begin{itemize}
\item C1: $j = l$ $\Rightarrow$ The two vertices represent the loss of the same packet $j$ by the two \ignore{different }receivers $i$ and $k$.
\item C2: $j\in \mathcal{H}_k$ and $l \in \mathcal{H}_i$ $\Rightarrow$ The requested packet of each vertex is in the Has set of the receiver that induced the other vertex.
\end{itemize}
Given this graph formulation, it is clear that any group of vertices fully adjacent to one another using coding opportunity edges (thus forming a clique in $\mathcal{G}$) can be served by one coding combination including an XOR of the packets identified by these vertices. According to the design of $\mathcal{G}$, we can easily infer that any clique $\kappa$ in $\mathcal{G}$ can include at most one vertex induced by any given receiver to maintain instant decodability. In the rest of the paper, we say that an IDNC packet \emph{targets} a receiver if the corresponding clique includes a vertex belonging to this receiver.

Based on this modeling of coding opportunities as the edges of a graph, we can define the coding opportunity density (or coding density for short) $\rho_c(\mathcal{G})$ for IDNC as the density of its graph $\mathcal{G}$. In graph theory, the graph density is the ratio of the total number of edges in this graph to the number of edges of a complete graph with the same number of vertices. We can express this graph density (and thus coding density) as:
\begin{equation} \label{eq:coding-efficiency}
\rho_c(\mathcal{G}) = \frac{\left|\mathcal{E}\right|}{\frac{1}{2} \left|\mathcal{V}\right|\left(\left|\mathcal{V}\right| - 1\right)\ignore{ - \sum_{i=1}^M \binom{\psi_i}{2}}}
\end{equation}
\ignore{This metric is a good measure of the density of coding opportunities with respect to the number of vertices in the system, as explained in \sref{sec:intro}.} It is obvious that the maximization of $\rho_c(\mathcal{G})$ guarantees a large number of coding opportunities with respect to the number of remaining packet requests (i.e. vertices), and thus a large number of receivers and packet requests can be served simultaneously in each IDNC packet.

From the above expression, we can see that, in order to maximize the coding density in each step, the selected cliques should be able to both maximize the number of edges and minimize the vertex set size. The number of vertices is clearly minimized by serving the maximum number of receivers in each transmission. However, this selection may decrease the coding density if the numerator is significantly reduced. To study this effect, we first need to derive an expression for the edge set size evolution after any arbitrary transmission.

\section{Exact Coding Opportunity Evolution} \label{sec:graph-evolution}
\ignore{In this section, we aim to derive an expression for the evolution of the number of coding opportunities (i.e. the graph's edge set size) when any arbitrary maximal clique is selected. }In order to derive an expression for the edge set size evolution, we start by deriving the expression of the edge set size for any given feedback state.

\subsection{Edge Set Size} \label{sec:exact-edge-set-size}
Theorem 1 introduces the expression of the edge set size of the IDNC graph.
\begin{theorem}\label{th:edge-set-size}
The edge set size for an arbitrary feedback state can be expressed as:
\begin{equation}\label{eq:edge-set-size}
\ignore{
\left|\mathcal{E}\right| = \frac{1}{2}\sum_{i=1}^M \sum_{\substack{k=1\\k\neq i}}^M \left(\psi_{i}\psi_{k} - \left(\psi_i + \psi_k -1\right)\psi_{ik} + \psi_{ik}^2\right) \qquad \qquad (\mbox{where}~\psi_{ik} = \left|\mathcal{W}_i\cap\mathcal{W}_k\right|)\;.}
\left|\mathcal{E}\right| = \frac{1}{2}\sum_{i=1}^M \sum_{\substack{k=1\\k\neq i}}^M \left(\psi_{ik} + \theta_{ik}\theta_{ki}\right),\qquad \mbox{where}\quad\psi_{ik} = \left|\mathcal{W}_i\cap\mathcal{W}_k\right|,\quad \theta_{ik} = \psi_{i} - \psi_{ik},\quad \theta_{ki} = \psi_{k} - \psi_{ik}\;.
\end{equation}
\end{theorem}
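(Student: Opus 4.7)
The plan is to prove the formula by partitioning the edge set according to the ordered pair of receivers inducing the two endpoints of each edge, and then counting, for each pair $(i,k)$ with $i\neq k$, the edges produced by condition C1 and by condition C2 separately.

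First, I would observe that every edge in $\mathcal{G}$ joins two vertices $v_{ij}$ and $v_{kl}$ that necessarily belong to two distinct receivers: C1 requires two different receivers losing the same packet, and C2 explicitly names two different receivers $i$ and $k$. (Within a single receiver, no two of its vertices can be adjacent since they represent distinct requests of that receiver.) Moreover, C1 and C2 are mutually exclusive: if $j=l$ then C2 would force $j\in\mathcal{H}_k$, contradicting $l=j\in\mathcal{W}_k$. Hence the edges between receivers $i$ and $k$ split cleanly into a C1-part and a C2-part, and I can sum $|\mathcal{E}_{ik}|$ over all ordered pairs $(i,k)$ with $i\neq k$ and divide by $2$.

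Next, for a fixed ordered pair $(i,k)$, I would evaluate the two contributions. The C1-edges correspond bijectively to packets simultaneously wanted by both receivers, giving exactly $|\mathcal{W}_i\cap\mathcal{W}_k|=\psi_{ik}$ edges. For the C2-edges, an edge is determined by choosing $j\in\mathcal{W}_i$ with $j\in\mathcal{H}_k$ (equivalently $j\in\mathcal{W}_i\setminus\mathcal{W}_k$, a set of size $\psi_i-\psi_{ik}=\theta_{ik}$) and $l\in\mathcal{W}_k$ with $l\in\mathcal{H}_i$ (equivalently $l\in\mathcal{W}_k\setminus\mathcal{W}_i$, a set of size $\psi_k-\psi_{ik}=\theta_{ki}$). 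Since these two candidate sets are disjoint, the constraint $j\neq l$ is automatic and the C2 count is simply $\theta_{ik}\theta_{ki}$. Adding these yields $|\mathcal{E}_{ik}|=\psi_{ik}+\theta_{ik}\theta_{ki}$.

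Finally I would sum over unordered pairs. Because $\psi_{ik}=\psi_{ki}$ and $\theta_{ik}\theta_{ki}$ is symmetric in $(i,k)$, summing the per-pair contribution over all ordered pairs double-counts each edge, producing the factor $\tfrac{1}{2}$ and the stated formula. I do not anticipate a hard obstacle: the only place one must be careful is verifying (i) that C1 and C2 are mutually exclusive so the counts add without overlap, and (ii) that the two sets used in the C2 count are disjoint, which is what makes the $j\neq l$ condition free of charge and turns the count into a clean product $\theta_{ik}\theta_{ki}$ rather than $\theta_{ik}\theta_{ki}$ minus some correction.
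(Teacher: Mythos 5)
Your proof is correct and follows essentially the same route as the paper: the same decomposition into per-pair contributions $\psi_{ik}+\theta_{ik}\theta_{ki}$, with C1 accounting for the $\psi_{ik}$ common-packet edges and C2 for the complete bipartite part between $\mathcal{W}_i\setminus\mathcal{W}_k$ and $\mathcal{W}_k\setminus\mathcal{W}_i$. The only cosmetic difference is that the paper's appendix reaches this via the handshake lemma (summing vertex degrees $\Delta_{ij}$), whereas you count edges per receiver pair directly --- which is precisely the argument the paper itself gives as the ``intuition'' behind the theorem in the discussion of its Figure~2.
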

\begin{proof}
The proof can be found in \appref{app:edge-set-size}.
\end{proof}
The intuition behind \thref{th:edge-set-size} is illustrated in \fref{fig:pairwise-edges}.
\begin{figure}[t]
\centering
  \includegraphics[width=0.4\linewidth]{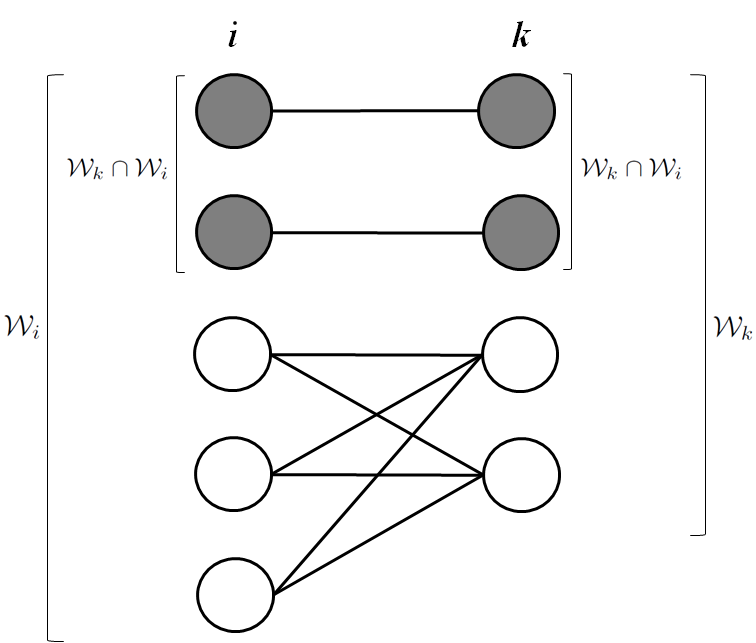}\\
  \caption{Pairwise edges between receivers $i$ and $k$ given their Wants sets. The grey vertices represent the packets that are commonly wanted by both receivers whereas the white vertices represent packets that are wanted by either of them and has been received by the other.}\label{fig:pairwise-edges}
\end{figure}
The figure depicts the number of pairwise edges between receivers $i$ and $k$ (denoted by $Y_{ik}$). As shown, the vertices of $i$ and $k$ can be classified into two sets. The gray vertices represent the vertices of $i$ and $k$ requesting the same packets (i.e. pairs of vertices of $i$ and $k$ with $j=l$). Thus, such vertex pairs are adjacent according to condition C1 in \sref{sec:IDNC-graph}. They cannot be adjacent to other vertices of the opposite receiver as this will violate condition C2. We say that each of these vertices is pairwise restricted by its adjacent vertex at the opposite receiver. Thus, these vertices will contribute to $Y_{ik}$ by $\left|\mathcal{W}_k \cap\mathcal{W}_i\right| = \psi_{ik}$ edges as shown in \fref{fig:pairwise-edges}.

The white vertices represent the mutually unrestricted vertices (i.e. not in $\mathcal{W}_k\cap\mathcal{W}_i$), and can all be connected to each other as a full bipartite subgraph because they all satisfy condition C2. Consequently, they contribute to $Y_{ik}$ by $\left(\psi_i - \psi_{ik}\right)\left(\psi_k - \psi_{ik}\right)$ edges. The final expression in \eqref{eq:edge-set-size} results from summing the $Y_{ik}$'s of all $i$ and $k\neq i$ and dividing by two to remove repetitions.

\subsection{Edge Set Size Evolution} \label{sec:exact-edge-set-evolution}
Before deriving the expression of the edge set size evolution, we will first illustrate the different possibilities of evolution on the pairwise subgraph in \fref{fig:pairwise-edges}, when either or both receivers $i$ and $k$ are targeted with source packets $p_i$ and $p_k$, respectively, in one IDNC transmission. These possibilities are illustrated in \fref{fig:pairwise-edge-evolution}, in which all preserved edges from $Y_{ik}^{(t)}$ to $Y_{ik}^{(t+1)}$ are removed for ease of illustration. The served vertices are marked in black and added (removed) edges are represented by solid (dashed) lines.
\begin{figure}[t]
\centering
  \subfigure[]{\includegraphics[width=0.39\linewidth]{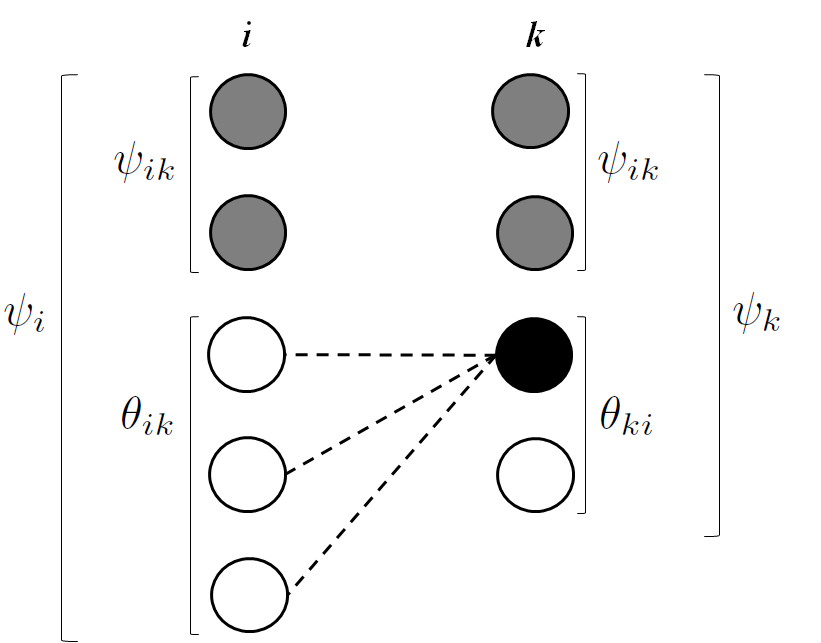}\label{fig:pairwise-edge-evolution-1}}
  \subfigure[]{\includegraphics[width=0.38\linewidth]{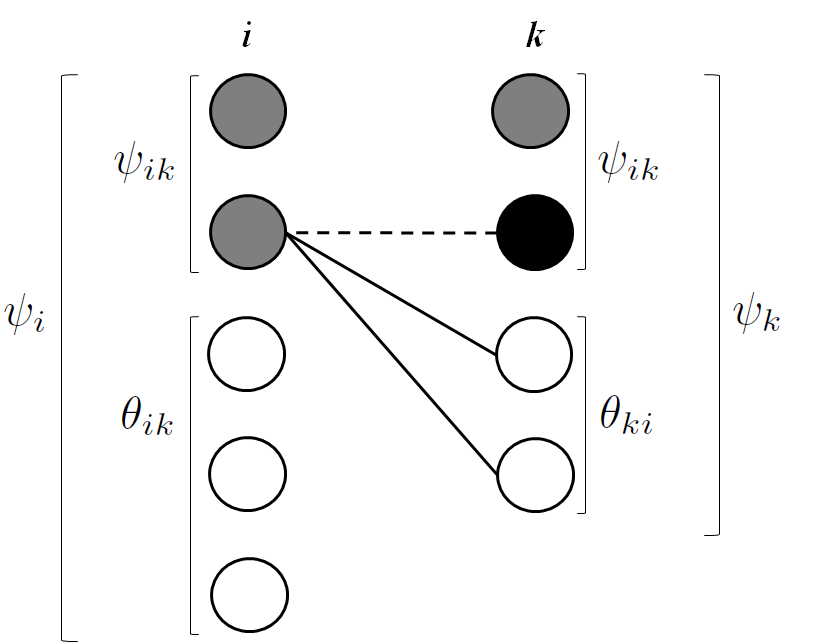}\label{fig:pairwise-edge-evolution-2}}\\
  \subfigure[]{\includegraphics[width=0.39\linewidth]{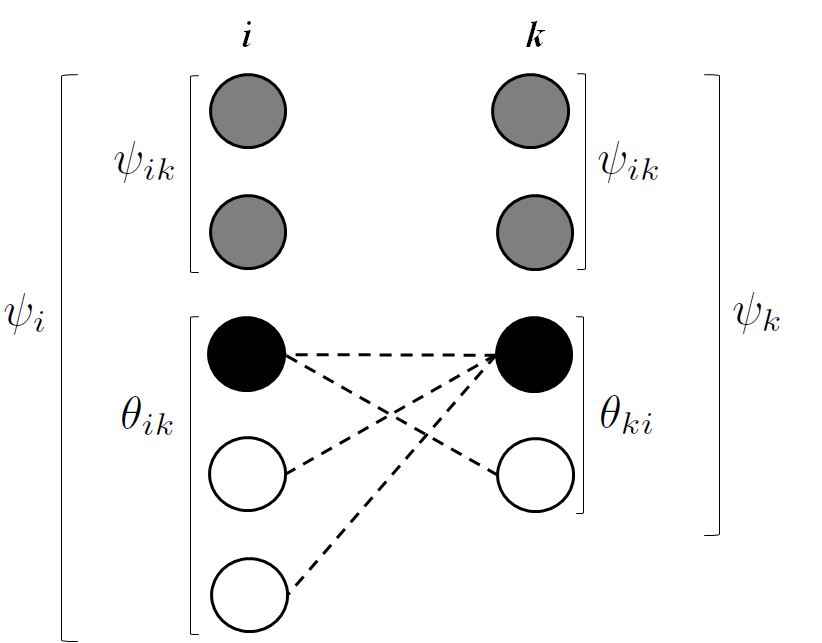}\label{fig:pairwise-edge-evolution-3}}
  \subfigure[]{\includegraphics[width=0.39\linewidth]{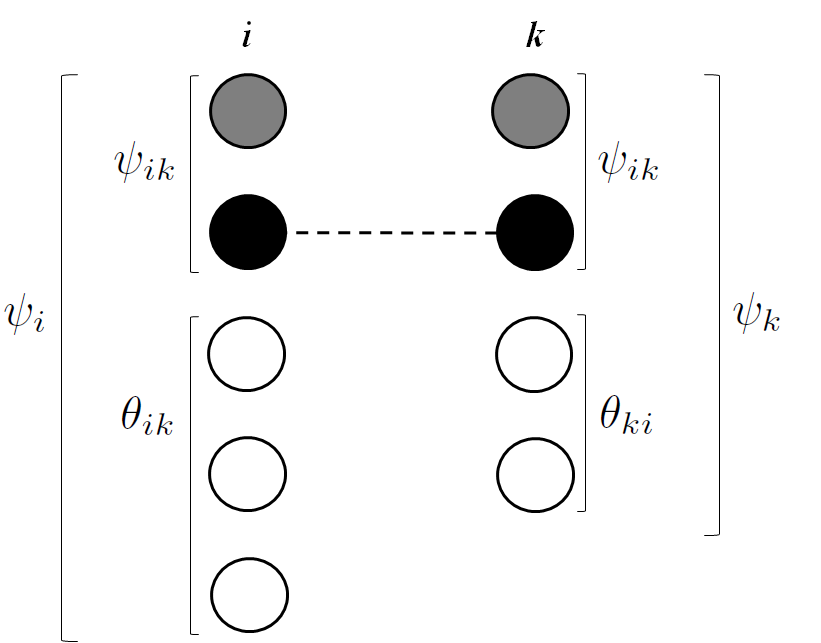}\label{fig:pairwise-edge-evolution-4}}
  \caption{Added (solid) and removed (dashed) edges between the vertices of receivers $i$ and $k$ in case: (a) Only one of the two receivers is targeted with an unrestricted vertex. (b) Only one receiver is targeted with a restricted vertex. (c) Each of the two receivers is targeted with an unrestricted vertex and they both receive. (d) Each of the two receivers is targeted with a restricted vertex and they both receive.}\label{fig:pairwise-edge-evolution}
\end{figure}

In \fref{fig:pairwise-edge-evolution-1}, only one receiver (in this case $k$) is targeted with an unrestricted vertex $v_{kp_k}$ with respect to the other receiver (in this case $i$). If this receiver receives, its vertex $v_{kp_k}$ and all its adjacent edges (which are shown in \fref{fig:pairwise-edges}) will disappear from the graph. Since the opposite vertices to these removed edges are already unrestricted with respect to $k$, they will not gain any additional edges.

In \fref{fig:pairwise-edge-evolution-2}, only one receiver (in this case $k$) is targeted with a restricted vertex $v_{kp_k}$ by the other receiver (in this case $i$). If this receiver $k$ receives, its vertex $v_{kp_k}$ will disappear as well as its edge to the corresponding restricted vertex of $i$ (i.e. $v_{ip_k}$). Consequently, this restricted vertex $v_{ip_k}$ becomes unrestricted with respect to $k$ and thus becomes adjacent to all $k$'s unrestricted vertices with respect to $i$.

The third possibility is when both receivers are targeted with one of their mutually unrestricted vertices. If only one of them receives, we get the case of \fref{fig:pairwise-edge-evolution-1}. If both receivers receive, each of the vertices will behave as in \fref{fig:pairwise-edge-evolution-1} resulting in the evolution in \fref{fig:pairwise-edge-evolution-3}.

The last possibility is when both receivers are targeted with one of their mutually restricted vertices. If only one of them receives, we get the case of \fref{fig:pairwise-edge-evolution-2}. If both receivers receive, they both disappear with their common edge and without addition of new mutual edges, as shown in \fref{fig:pairwise-edge-evolution-4}.

After the description of these possibilities, we can now introduce the following theorem.
\begin{theorem}\label{th:edge-set-size-evolution}
For an arbitrary attempted clique $\kappa$ at time $t$ with a set of targeted receivers $\mathcal{T}$, the edge set size at time $t+1$ after this attempt can be expressed as:
\ignore{
\begin{align}\label{eq:edge-set-size-evolution}
\left|\mathcal{E}^{(t+1)}\right| &= \left|\mathcal{E}^{(t)}\right| \nonumber \\
&+ \frac{1}{2}\sum_{\substack{i\notin\mathcal{T},k\in\mathcal{T}\\p_k\in \mathcal{W}_i}} X_k\hat{\theta}_{ki} - \frac{1}{2} \sum_{\substack{i\notin\mathcal{T},k\in\mathcal{T}\\p_k\notin\mathcal{W}_i}} X_k\theta_{ik}\nonumber\\
& + \frac{1}{2} \sum_{\substack{i\in\mathcal{T},k\notin\mathcal{T}\\p_i\in\mathcal{W}_k}} X_i\hat{\theta}_{ik} - \frac{1}{2} \sum_{\substack{i\in\mathcal{T},k\notin\mathcal{T}\\p_i\notin\mathcal{W}_k}} X_i\theta_{ki}\nonumber\\
& + \frac{1}{2} \sum_{\substack{\{i,k\}\in\mathcal{T}\\p_k\in\mathcal{W}_i}} \Bigg(X_iX_k - X_i\theta_{ki} - X_k\theta_{ik}\Bigg)\nonumber\\
& +  \frac{1}{2} \sum_{\substack{\{i,k\}\in\mathcal{T}\\p_i\in\mathcal{W}_k}} \Bigg(X_i\hat{\theta}_{ik} + X_k\hat{\theta}_{ki}\nonumber\\
& \qquad\qquad\qquad - X_i X_k \left(\hat{\theta}_{ik} + \hat{\theta}_{ki} - X_i X_k\right)\Bigg)\;.
\end{align}
}
\begin{align}\label{eq:edge-set-size-evolution}
&\left|\mathcal{E}^{(t+1)}\right| = \left|\mathcal{E}^{(t)}\right| + \frac{1}{2}\sum_{\substack{i\notin\mathcal{T},k\in\mathcal{T}\\p_k\in \mathcal{W}_i}} X_k\hat{\theta}_{ki} - \frac{1}{2} \sum_{\substack{i\notin\mathcal{T},k\in\mathcal{T}\\p_k\notin\mathcal{W}_i}} X_k\theta_{ik} + \frac{1}{2} \sum_{\substack{i\in\mathcal{T},k\notin\mathcal{T}\\p_i\in\mathcal{W}_k}} X_i\hat{\theta}_{ik} - \frac{1}{2} \sum_{\substack{i\in\mathcal{T},k\notin\mathcal{T}\\p_i\notin\mathcal{W}_k}} X_i\theta_{ki}\nonumber\\
& - \frac{1}{2} \sum_{\substack{\{i,k\}\in\mathcal{T}\\p_i\notin\mathcal{W}_k}} \Bigg(X_i\theta_{ki} + X_k\theta_{ik} - X_iX_k\Bigg)\ignore{\nonumber\\
&} +  \frac{1}{2} \sum_{\substack{\{i,k\}\in\mathcal{T}\\p_i\in\mathcal{W}_k}} \Bigg(X_i\hat{\theta}_{ik} + X_k\hat{\theta}_{ki}- X_i X_k \left(\hat{\theta}_{ik} + \hat{\theta}_{ki} - X_i X_k\right)\Bigg)\;,
\end{align}
where $\hat{\theta}_{ik} = \theta_{ik} - 1$, $\hat{\theta}_{ki} = \theta_{ki} - 1$, and $X_h$ is the reception indicator of receiver $h$, which is equal to 1 if $h$ receives the transmitted packet and zero otherwise.
\end{theorem}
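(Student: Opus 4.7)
The plan is to leverage Theorem 1, which expresses the edge count as $|\mathcal{E}|=\frac{1}{2}\sum_{i}\sum_{k\neq i} Y_{ik}$ with $Y_{ik}=\psi_{ik}+\theta_{ik}\theta_{ki}$. Since a transmission modifies only $\mathcal{W}_h$ (and hence $\psi_h$) for targeted receivers $h$ that successfully receive, the change decomposes as $|\mathcal{E}^{(t+1)}|-|\mathcal{E}^{(t)}|=\frac{1}{2}\sum_{i}\sum_{k\neq i}\Delta Y_{ik}$, reducing the task to computing $\Delta Y_{ik}$ as an explicit polynomial in the Bernoulli indicators $X_i, X_k$ for each ordered pair.

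The natural organization follows the four scenarios of \fref{fig:pairwise-edge-evolution}, splitting by the membership of $i$ and $k$ in $\mathcal{T}$ and by whether the target vertices are restricted or unrestricted. When neither receiver lies in $\mathcal{T}$, no vertex is deleted and $\Delta Y_{ik}=0$. When only $k$ is targeted with a restricted vertex ($p_k\in\mathcal{W}_i$), the updates $\psi_k\mapsto\psi_k-X_k$ and $\psi_{ik}\mapsto\psi_{ik}-X_k$ propagate to $\theta_{ik}\mapsto\theta_{ik}+X_k$ and $\theta_{ki}\mapsto\theta_{ki}$; plugging into $Y_{ik}$ gives $\Delta Y_{ik}=X_k\hat{\theta}_{ki}$. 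In the unrestricted subcase ($p_k\notin\mathcal{W}_i$) only $\psi_k$ changes, producing $\Delta Y_{ik}=-X_k\theta_{ik}$. The symmetric treatment when only $i$ is targeted is obtained by swapping indices, and together these four ordered-pair contributions account for the four middle summations of \eref{eq:edge-set-size-evolution}.

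The crux is the ``both targeted'' case. I would first use the clique adjacency between $v_{ip_i}$ and $v_{kp_k}$ to observe that condition C1 forces $p_i=p_k$ (the restricted subcase of \Fref{fig:pairwise-edge-evolution-4}) while condition C2 forces $p_i\in\mathcal{H}_k$ and $p_k\in\mathcal{H}_i$ simultaneously (the unrestricted subcase of \Fref{fig:pairwise-edge-evolution-3}); these are the only two possibilities and they are tracked by the single condition $p_i\in\mathcal{W}_k$. For each subcase, I would enumerate the four reception outcomes $(X_i,X_k)\in\{0,1\}^2$, recompute $\psi_i,\psi_k,\psi_{ik},\theta_{ik},\theta_{ki}$, and express $\Delta Y_{ik}$ as the unique bilinear polynomial in $X_i,X_k$ that interpolates these four values. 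Summing over ordered pairs with the outer factor $\frac{1}{2}$ and adding to $|\mathcal{E}^{(t)}|$ then yields \eref{eq:edge-set-size-evolution}.

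The main obstacle will be the bookkeeping in the ``both targeted'' case: one must simultaneously handle the deletion of two vertices, the possible promotion of a previously restricted neighbor to unrestricted status (which creates new bipartite edges), and the non-trivial $X_iX_k$ cross term arising from the independence of the two reception events. Judicious use of the identity $X_h^2=X_h$ to collapse higher-order monomials, together with careful tracking of which vertices contribute to $\psi_{ik}$ versus $\theta_{ik}$ and $\theta_{ki}$ before and after the update, will be essential for reducing the case-by-case expressions to the compact form displayed in \eref{eq:edge-set-size-evolution}.
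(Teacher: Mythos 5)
Your proposal is correct and follows essentially the same route as the paper's own proof: express $\left|\mathcal{E}\right|$ through the pairwise quantities $Y_{ik}=\psi_{ik}+\theta_{ik}\theta_{ki}$ from \thref{th:edge-set-size}, tabulate how $\psi_i$, $\psi_k$ and $\psi_{ik}$ (hence $\theta_{ik}$, $\theta_{ki}$) evolve in each targeting/membership case as polynomials in the reception indicators, and substitute back, with the observation that clique adjacency reduces the both-targeted case to the two subcases $p_i=p_k$ and $p_i\in\mathcal{H}_k,\,p_k\in\mathcal{H}_i$. One remark: if you carry out your interpolation for the both-targeted, $p_i\in\mathcal{W}_k$ subcase, the value at $X_i=X_k=1$ is $\Delta Y_{ik}=-1$ (the two restricted vertices vanish with their single common edge), so the last summand must read $-X_iX_k\left(\hat{\theta}_{ik}+\hat{\theta}_{ki}+X_iX_k\right)$; your method thus exposes a sign slip in the compact form displayed in \eref{eq:edge-set-size-evolution}, even though the case-by-case values in the paper's appendix are themselves correct.
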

\begin{proof}
The proof can be found in \appref{app:edge-set-size-evolution}.
\end{proof}
We will analyze this obtained exact evolution expression in the next section.

\section{Analysis of the Exact Coding Opportunity Evolution} \label{sec:packet-selection-strategy}
From \eqref{eq:edge-set-size-evolution}, we can prove the following theorem for any two arbitrary receivers.
\begin{theorem}\label{th:two-receivers}
Targeting either one or both receivers $i$ and $k$ with a packet in $\mathcal{W}_{i}\cap \mathcal{W}_{k}$ (i.e. a common wanted packet) at time $t$ results in a greater or equal number of pairwise edges between them at time $t+1$ $\left(\mbox{i.e.}~Y_{ik}^{(t+1)}\right)$ compared to targeting either one or both receivers with packets that are not in $\mathcal{W}_{i}\cap \mathcal{W}_{k}$.
\ignore{is greater than or equal to mmon wanted source packets at time $t$ results in a greater than or equal number of pairwise edges between them at time $t+1$ (i.e. $Y_{ik}^{(t+1)}$) compared to targeting either one or both of them with a non-common wanted source packet them maximizes the number of their pairwise edges at time $t+1$, whether either or both receivers are targeted with this source packet in this transmission}.
\end{theorem}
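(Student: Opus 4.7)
The plan is to reduce the claim to a pointwise comparison in the reception indicators $X_i,X_k\in\{0,1\}$. Since $Y_{ik}^{(t)}$ is the same irrespective of the choice of transmitted packet, it suffices to show that the contribution to $\Delta Y_{ik}\define Y_{ik}^{(t+1)}-Y_{ik}^{(t)}$ under the common-wanted choice dominates the contribution under the non-common choice, in every realisation of $(X_i,X_k)$ and under each admissible targeting structure on the pair $\{i,k\}$.

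The first step is to read the relevant $\Delta Y_{ik}$ directly off \eqref{eq:edge-set-size-evolution}. Each unordered pair contributes to those sums under both orderings $(i,k)$ and $(k,i)$, so the leading $\tfrac12$ cancels. If only $k$ is targeted, the contribution is $X_k\hat\theta_{ki}$ when $p_k\in\mathcal{W}_i\cap\mathcal{W}_k$ and $-X_k\theta_{ik}$ when $p_k\in\mathcal{W}_k\setminus\mathcal{W}_i$, with the symmetric expressions holding when only $i$ is targeted. When both $i$ and $k$ are targeted, condition~C1 forces the common-packet choice to be $p_i=p_k=p\in\mathcal{W}_i\cap\mathcal{W}_k$, while condition~C2 forces the non-common choice to satisfy $p_i\in\mathcal{H}_k$ and $p_k\in\mathcal{H}_i$; the corresponding $\Delta Y_{ik}$ expressions are precisely the last two summands of \eqref{eq:edge-set-size-evolution}.

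The second step is to verify the three head-to-head inequalities. For the single-target case, $\Delta Y_{ik}^{\mathrm{com}}-\Delta Y_{ik}^{\mathrm{ncom}}=X_k(\theta_{ki}+\theta_{ik}-1)\geq 0$, where non-negativity uses the fact that feasibility of the non-common alternative forces $\theta_{ki}\geq 1$. For the double-target case I would expand the difference over the four values of $(X_i,X_k)\in\{0,1\}^2$: when exactly one indicator is set the comparison collapses to the single-target inequality (now with both $\theta_{ik}\geq 1$ and $\theta_{ki}\geq 1$ imposed by feasibility of the mutually unrestricted configuration), and when both indicators are set a short algebraic simplification yields a non-negative expression in $\theta_{ik}+\theta_{ki}$, again bounded below by the same feasibility constraint.

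The main obstacle I expect is not algebraic but combinatorial bookkeeping: correctly matching each unordered pair $\{i,k\}$ to its two contributions in the ordered-pair sums of \eqref{eq:edge-set-size-evolution}, and tracking which of conditions C1 and~C2 is active in each scenario. The recurring idea behind all three comparisons is that the $-1$ hidden in the shorthand $\hat\theta_{\cdot\cdot}=\theta_{\cdot\cdot}-1$ is exactly absorbed by the feasibility constraint $\theta_{\cdot\cdot}\geq 1$ that is forced whenever the non-common alternative exists, so the common-packet choice never loses pairwise edges relative to the non-common one.
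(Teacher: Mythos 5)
Your proposal is correct and follows essentially the same route as the paper's proof in \appref{app:two-receivers}: both read the pairwise increments off the case decomposition underlying \eqref{eq:edge-set-size-evolution}, enumerate the realizations of $(X_i,X_k)$ when both receivers are targeted, and close every comparison with the feasibility facts $\theta_{ik}\geq 1$ and $\theta_{ki}\geq 1$ forced by conditions C1/C2 and instant decodability whenever the non-common alternative exists. The only cosmetic difference is that you bound the difference $\Delta Y_{ik}^{\mathrm{com}}-\Delta Y_{ik}^{\mathrm{ncom}}$ directly, whereas the paper bounds each scenario against the common benchmark $Y_{ik}^{(t)}-1$.
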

\begin{proof}
The proof can be found in \appref{app:two-receivers}.
\end{proof}

The above theorem proves that serving a common wanted packet for receivers $i$ and $k$ always results in a larger increase or smaller reduction in the number of their pairwise edges $Y_{ik}^{(t+1)}$, whether both receivers are targeted with this packet or only one of them is. Now, since Equation \eqref{eq:edge-set-size} in \thref{th:edge-set-size} expresses the overall edge set size as a linear addition of these numbers of pairwise edges, then having this property satisfied for all receivers (i.e. the sender transmits a packet that is in the Wants set of all receivers at time $t$) will result in the maximum edge set size that could be achieved at time $t+1$ according to Theorems \ref{th:edge-set-size} and \ref{th:two-receivers}. However, violating this condition for any one receiver $i$ will replace its $Y_{ik}^{(t+1)}$ $\forall~k\in\mathcal{M}\setminus i$ in \eqref{eq:edge-set-size} with smaller values, which will lead to smaller edge set size. Increasing the number of receivers violating this condition will result in further replacements with smaller terms, and thus a larger reduction in the edge set size.

Since the existence of common wanted packets by all the receivers is usually infeasible, one solution could be to serve the packets that are wanted by the maximum number of receivers. We will refer to this strategy as the \emph{Most Wanted Packet Serving (MoWPS) strategy}. We can express this strategy as choosing the maximal clique $\kappa^*$ in each transmission such that:
\begin{equation} \label{eq:MoWPS}
\kappa^* = \arg\min_{\kappa \in \mathcal{G}} \sum_{j|v_{ij}\in\kappa} \left|\mathbf{\Omega}_j\right|^n \qquad\qquad\mbox{s.t.}\quad \mathbf{\Omega}_j = \left\{i\in\mathcal{M}|j\in\mathcal{W}_i\right\}\;,
\end{equation}
where $\mathbf{\Omega}_j$ is the set of receivers wanting packet $j$ and $n$ is a biasing factor. In other words, the MoWPS strategy selects the maximum weight clique in the IDNC, such that the weight of each vertex $v_{ij}$ in the graph is defined by $\mathbf{\Omega}_j$. Thus, this maximum weight clique will include the vertices representing the packets that are wanted by most of the receivers.

However, the MoWPS strategy suffer from two main issues:
\begin{itemize}
\item If a common wanted packet by all the receivers does not exist, which is highly probable, then it will have to serve packets that are unwanted by some of the receivers. Since we cannot infer from \eqref{eq:edge-set-size-evolution} the best selection of such receivers, this strategy may not be able to guarantee the achievement of the best possible performance.
\item Even if there existed packets that are wanted by all or a vast majority of the receivers, this strategy could rapidly deplete such occurrences in the first few transmissions. Consequently, further use of the same strategy will not result in the desired continuous increase in coding density. In other words, MoWPS strategy may succeed in significantly rasing the coding density in the first few transmissions, but also using it further may significantly decrease the coding density in some stages of the recovery transmission phase. We will illustrate and interpret these effects in \sref{sec:simulations}.
\end{itemize}

Since we were not able to identify a clear strategy by analyzing the exact evolution expression of coding opportunities, we need to further investigate the problem, by diminishing the effects that were complicating this exact expression and its analysis. By examining this expression in \eqref{eq:edge-set-size-evolution}, we can see that its complication comes from its great dependence on the identities of the selected packets for transmission and whether they belong to the Has or Wants sets of the different receivers. To eliminate this complication, we will thus derive an expected evolution expression of the edge set size, which only depends on the cardinalities of the receivers' feedback sets and eliminates the dependency on their actual packet contents. We will then re-analyze this new expression to identify a more robust coding strategy to continuously increase the coding density in IDNC. This will be the target of the next two sections.

\ignore{
this finding does not mean that the MoWPS strategy will always increase the coding density at all stages of the recovery transmission phase. The increase of coding density, after any transmission, will be conditioned on the presence of large sets of receivers requesting the same packets for this transmission. When this property is present in the beginning of the recovery phase, transmitting these packets wanted by a large number of receivers will remove a large number of vertices from the graph, increase the number of edges, and thus increase the coding density. At the same time, the number of receivers, still wanting these packets, will considerably decrease after each transmission. Consequently, the MoWPS strategy will result in serving a smaller number of vertices in subsequent transmissions, which may decrease the coding density at some stages of the recovery transmission phase. We will illustrate and interpret these effects in \sref{sec:simulations}.
}

\section{Expected Coding Opportunity Evolution} \label{sec:expected-coding-opportunity-evolution}
In this section, we will derive an expected value representation of the edge set size evolution, which only depends on the cardinalities of the receivers' feedback sets and eliminates the dependency on their actual packet contents. In this case, each Wants set $\mathcal{W}_i$ becomes a random set of packets of size $\psi_i$ drawn from the pool of $N$ original source packets. Consequently, the number of packets that are found in both Wants sets of two receivers $i$ and $k$ (i.e. $\left|\mathcal{W}_i\cap\mathcal{W}_k\right|$) becomes a random variable with hypergeometric distribution. With this approach, the expression of the edge set size evolution will be an expectation given these hypergeometric random variables.\ignore{ To derive this expression, we will first introduce an expression for the expected edge set size for any arbitrary Has and Wants vectors $\boldsymbol{\varrho} = \{\varrho_1,\dots,\varrho_M$\} and $\boldsymbol{\psi} = \{\psi_1,\dots,\psi_M\}$.} We will derive this expression in the following two theorems.
\begin{theorem} \label{th:expected-edge-set-size}
Given the receivers' feedback set cardinalities, the expected edge set cardinality of the graph is equal to:
\begin{equation} \label{eq:expected_edge_set_cardinality}
\mathds{E}\left[\left|\mathcal{E}\right|\right] = \frac{1}{2}\sum_{i=1}^M \psi_i\mathds{E}\left[\Delta_i\right]  = \frac{1}{2}\sum_{i=1}^M \psi_i\left\{\sum_{\substack{k = 1\\k\neq i}}^M\: \frac{\psi_k}{N} \left(1+\frac{\varrho_k \varrho_i}{N-1}\right)\right\}\;,
\end{equation}
where $\mathds{E}\left[\Delta_i\right]$ is the expected degree of vertices induced by receiver $i$.
\end{theorem}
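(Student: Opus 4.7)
The plan is to compute $\mathds{E}[|\mathcal{E}|]$ directly through the handshake identity $2|\mathcal{E}| = \sum_i \sum_{j\in\mathcal{W}_i}\deg(v_{ij})$, under the probabilistic model in which each $\mathcal{W}_h$ is an independent uniformly random subset of $\mathcal{N}$ of the specified cardinality $\psi_h$. Under this model, relabeling symmetry guarantees that every vertex induced by receiver $i$ has the same expected degree $\mathds{E}[\Delta_i]$, so the handshake identity collapses to $\mathds{E}[|\mathcal{E}|] = \tfrac{1}{2}\sum_i \psi_i\,\mathds{E}[\Delta_i]$. The task therefore reduces to establishing the bracketed expression for $\mathds{E}[\Delta_i]$ in the statement.

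To obtain that expression I would fix a vertex $v_{ij}$ with $j\in\mathcal{W}_i$ and decompose its neighbors at each other receiver $k$ according to the two IDNC edge conditions. Condition C1 contributes a single neighbor, $v_{kj}$, precisely when $j\in\mathcal{W}_k$, which has probability $\psi_k/N$ by the independence of $\mathcal{W}_i$ and $\mathcal{W}_k$. Condition C2 contributes $|\mathcal{W}_k\cap\mathcal{H}_i|$ neighbors when $j\in\mathcal{H}_k$ and none otherwise; the event $j\in\mathcal{H}_k$ has probability $\varrho_k/N$, and conditional on both $j\in\mathcal{W}_i$ and $j\in\mathcal{H}_k$ the sets $\mathcal{H}_i$ and $\mathcal{W}_k$ are independent uniformly random subsets of $\mathcal{N}\setminus\{j\}$ of sizes $\varrho_i$ and $\psi_k$ respectively. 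By linearity of expectation this yields $\mathds{E}[|\mathcal{W}_k\cap\mathcal{H}_i|] = \psi_k\varrho_i/(N-1)$. Adding the C1 and C2 contributions gives $\psi_k/N + (\varrho_k/N)\cdot(\psi_k\varrho_i/(N-1)) = (\psi_k/N)\bigl(1+\varrho_k\varrho_i/(N-1)\bigr)$, and summing over $k\neq i$ reproduces the formula in the theorem.

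As a consistency check, one could instead start from the closed form in \thref{th:edge-set-size} and take expectations term by term. Noting that $\psi_{ik} = |\mathcal{W}_i\cap\mathcal{W}_k|$ is hypergeometric with mean $\psi_i\psi_k/N$ and variance $\psi_i\psi_k\varrho_i\varrho_k/(N^2(N-1))$, expanding $\theta_{ik}\theta_{ki} = \psi_i\psi_k - (\psi_i+\psi_k)\psi_{ik}+\psi_{ik}^2$ and using $\varrho_h = N-\psi_h$, the per-pair contribution $\mathds{E}[\psi_{ik}+\theta_{ik}\theta_{ki}]$ simplifies to exactly $(\psi_i\psi_k/N)(1+\varrho_i\varrho_k/(N-1))$, in agreement with the degree computation. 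The only mildly delicate point in either route is the conditioning in the C2 step: after conditioning on $j\in\mathcal{H}_k$, the sample space for $\mathcal{W}_k$ shrinks to the $N-1$ packets other than $j$, which is precisely what produces the $N-1$ in the denominator of the correction term and is the feature that is easy to miss in a quick calculation.
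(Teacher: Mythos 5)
Your proposal is correct and follows essentially the same route as the paper's proof: both reduce the problem via the degree-sum identity to the expected per-vertex degree of receiver $i$ (the paper's Lemma~\ref{lem:expected-degree}), decompose that degree according to conditions C1 and C2, and handle the C2 term by conditioning on $j\in\mathcal{H}_k$ so that the relevant intersection is drawn from the reduced pool of $N-1$ packets. The only cosmetic difference is that you evaluate $\mathds{E}\left[\left|\mathcal{W}_k\cap\mathcal{H}_i\right|\right]$ in one line by linearity over the conditionally independent pair, whereas the paper writes the C2 count as $\left|\mathcal{W}_k\right|-\left|\mathcal{W}_k\cap\mathcal{W}_i\right|$ and sums the hypergeometric pmf explicitly in \eqref{eq:expected-edge-set-size-proof-II}; both produce the same $N-1$ correction, and your variance-based consistency check against Theorem~\ref{th:edge-set-size} also works out.
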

\begin{proof}
The proof can be found in \appref{app:graph-density}.
\end{proof}
\ignore{Using this expression in \thref{th:expected-edge-set-size}, we can introduce the expression for the edge set size evolution for any arbitrary chosen clique $\kappa$ with a targeted set $\mathcal{T}$ in the following theorem.}
\begin{theorem} \label{th:expected-edge-set-size-evolution}
For any given feedback state and any given maximal clique $\kappa$, chosen for transmission at time $t$, the expected edge set size of the IDNC graph at time $t+1$ is expressed as:
\begin{equation}\label{eq:expected-edge-set-size-evolution}
\mathds{E}\left[\left|\mathcal{E}^{(t+1)}\right|\right] = \;\; \mathds{E}\left[\left|\mathcal{E}^{(t)}\right|\right]- \frac{1}{2} \sum_{i\in\mathcal{T}} q_i \mathds{E}\left[\Delta_i^{(t)}\right] \ignore{ \nonumber\\
&} + \frac{1}{2}\sum_{i\in\mathcal{T}}  \psi_i \left(\alpha_i  - \frac{q_i \gamma_i}{\psi_i}\right) + \frac{1}{2}\sum_{i\notin\mathcal{T}}  \psi_i \beta_i\;,
\end{equation}
where
\begin{align}
&\alpha_i  = \sum_{\substack{k=1\\k\neq i}}^M q_i \xi_k -\sum_{\substack{k\in\mathcal{T}\\k\neq i}}\Phi_k(q_i) \ignore{\;, \label{eq:alpha}\\
&} \qquad\beta_i  = -\sum_{\substack{k\in\mathcal{T}\\k\neq i}}\Phi_k(0)\ignore{\;,\label{eq:beta}\\
&} \qquad\gamma_i  = \sum_{\substack{k=1\\k\neq i}}^M \xi_k -\sum_{\substack{k\in\mathcal{T}\\k\neq i}}\Phi_k(1) \ignore{\;,}
\label{eq:gamma} \\
&\Phi_k(x) = \frac{q_k}{N} \left( 1 + \frac{\left(\varrho_k-\psi_k+1\right) \left(\varrho_i+x\right)}{N-1}\right) \ignore{\;, \label{eq:phi}\\
&} \qquad \qquad~\xi_k = \frac{\psi_k \varrho_k}{N(N-1)}\ignore{\;.}\label{eq:xi}
\end{align}
\end{theorem}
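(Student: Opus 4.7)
The plan is to take the expectation of the exact evolution identity~\eqref{eq:edge-set-size-evolution}, using independence between the reception indicators and the random Wants-set structure, then regroup the resulting terms into the stated form. The randomness has two independent layers: the Bernoulli receptions $X_h\sim\mathrm{Bern}(q_h)$, which are mutually independent across receivers and independent of the combinatorics, and the hypergeometric model in which each $\mathcal{W}_i$ is a uniformly random $\psi_i$-subset of the $N$ source packets.

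First I would split the six sums in~\eqref{eq:edge-set-size-evolution} and take expectations term by term. Because the $X_h$'s are independent of the sets, each factor $X_k$ contributes $q_k$ and each product $X_iX_k$ (with $i\neq k$) contributes $q_iq_k$. The case splits ``$p_k\in\mathcal{W}_i$'' versus ``$p_k\notin\mathcal{W}_i$'' are handled by inserting indicators $\mathbf{1}\{p_k\in\mathcal{W}_i\}$ and conditioning: since $p_k$ is a fixed element of $\mathcal{W}_k$ while $\mathcal{W}_i$ is a uniform $\psi_i$-subset, $\Pr[p_k\in\mathcal{W}_i]=\psi_i/N$ and $\Pr[p_k\notin\mathcal{W}_i]=\varrho_i/N$.

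Next I would compute the joint moments of $\psi_{ik}$, $\theta_{ik}$, and $\hat\theta_{ki}$ against these indicators. The unconditional mean $\mathds{E}[\psi_{ik}]=\psi_i\psi_k/N$ (and hence $\mathds{E}[\theta_{ki}]=\psi_k\varrho_i/N$) follows directly from the hypergeometric model. Conditioning on the status of $p_k$ with respect to $\mathcal{W}_i$ reduces the law of $\psi_{ik}$ to a hypergeometric over the remaining $\psi_k-1$ elements of $\mathcal{W}_k$ drawn from $N-1$ candidates, which after simplification yields affine expressions in $\varrho_i$ (or $\varrho_i+1$, $\varrho_i+q_i$) weighted by $(\varrho_k-\psi_k+1)/(N-1)$. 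These moments are exactly what the definition of $\Phi_k(x)$ encapsulates in compact form, with the argument $x\in\{0,q_i,1\}$ toggling between the three ways the reception of receiver $i$ may enter: deterministically absent ($x=0$), stochastically present ($x=q_i$), or deterministically present ($x=1$).

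Finally I would sort the expected terms by the membership of $i,k$ in $\mathcal{T}$. The terms with $i\in\mathcal{T},k\notin\mathcal{T}$ and with $\{i,k\}\subseteq\mathcal{T}$ recombine into the $\alpha_i$ and $\gamma_i$ groupings, the symmetric $i\notin\mathcal{T},k\in\mathcal{T}$ terms supply $\beta_i$, and the subtracted piece $-\tfrac{1}{2}\sum_{i\in\mathcal{T}}q_i\mathds{E}[\Delta_i^{(t)}]$ emerges after isolating, inside the $\gamma_i$ block, the full-range sum $\sum_{k\neq i}\tfrac{\psi_k}{N}\bigl(1+\tfrac{\varrho_k\varrho_i}{N-1}\bigr)$, which Theorem~\ref{th:expected-edge-set-size} identifies as $\mathds{E}[\Delta_i^{(t)}]$. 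The main obstacle is the bookkeeping: the exact evolution branches into six subcases according to the locations of $i,k$ relative to $\mathcal{T}$ and to whether $p_k$ (respectively $p_i$) lies in the opposite receiver's Wants set, and the main work is verifying that the joint moments from each subcase reassemble without sign or combinatorial errors into exactly the three coefficients $\alpha_i,\beta_i,\gamma_i$ with the prescribed arguments of $\Phi_k$.
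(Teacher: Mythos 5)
Your route is genuinely different from the paper's, and it contains a gap. The paper never takes the expectation of the exact identity \eqref{eq:edge-set-size-evolution}. Instead it works entirely at the level of expected vertex degrees: \lref{lem:degree-evolution} substitutes the evolved cardinalities $\psi_k-X_k$, $\varrho_k+X_k$, $\varrho_i+X_i$ into the static expected-degree formula of \lref{lem:expected-degree}, the conditional expected edge set size is written as $\frac{1}{2}\sum_{i\in\mathcal{T}}(\psi_i-X_i)\,\mathds{E}\bigl[\Delta^{(t+1)}_{i\in\mathcal{T}}\big|\mathbf{X}\bigr]+\frac{1}{2}\sum_{i\notin\mathcal{T}}\psi_i\,\mathds{E}\bigl[\Delta^{(t+1)}_{i\notin\mathcal{T}}\big|\mathbf{X}\bigr]$, and only the reception vector $\mathbf{X}$ is then averaged out. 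In particular the term $-\frac{1}{2}\sum_{i\in\mathcal{T}}q_i\mathds{E}\bigl[\Delta_i^{(t)}\bigr]$ arises from the cross-moment $\mathds{E}_{\mathbf{X}}\bigl\{X_i\,\mathds{E}\bigl[\Delta^{(t+1)}_{i\in\mathcal{T}}\big|\mathbf{X}\bigr]\bigr\}=q_i\bigl(\mathds{E}\bigl[\Delta_i^{(t)}\bigr]+\gamma_i\bigr)$, i.e., from the $-X_i$ part of the vertex-count factor, not from isolating a full-range sum inside the $\gamma_i$ block as you describe.

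The concrete gap is in your treatment of the targeted-pair terms. For $\{i,k\}\subseteq\mathcal{T}$, instant decodability of the clique forces either $p_i=p_k$ (condition C1, so $p_i\in\mathcal{W}_k$) or $p_i\in\mathcal{H}_k$ and $p_k\in\mathcal{H}_i$ (condition C2, so $p_i\notin\mathcal{W}_k$). Hence the probabilities of the two branches in the last two sums of \eqref{eq:edge-set-size-evolution} are not the unconditional hypergeometric value $\psi_k/N$ that you assign by treating $p_i$ as a fixed element and $\mathcal{W}_k$ as an independent uniform subset; they are dictated by which adjacency condition the chosen clique uses for that pair, which is extra structure not determined by $\mathcal{T}$ and the cardinalities, and conditioning on $\kappa$ being a valid maximal clique further biases the law of the Wants sets away from the uniform model. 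The theorem's formula depends only on $\mathcal{T}$ and the cardinalities precisely because the paper's derivation never conditions on packet identities. As written, your averaging of the exact identity is not well-posed for targeted pairs, and the remaining conditional-moment computations and the regrouping into $\alpha_i$, $\beta_i$, $\gamma_i$ with the prescribed arguments of $\Phi_k$ are asserted rather than carried out, so the plan does not yet constitute a proof.
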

\begin{proof}
The proof can be found in \appref{app:expected-edge-set-size-evolution}.
\end{proof}
The above formula in \eqref{eq:expected-edge-set-size-evolution} shows that the expected edge set size at $t+1$ is affected by two main components with respect to its value at time $t$. First, it suffers from a reduction due to the potential disappearance of served vertices and their edges, which is quantified by $\frac{1}{2} q_i \mathds{E}\left[\Delta_i^{(t)}\right]$ for each targeted receiver. The second component is the change in expected degrees of the remaining vertices, which is quantified by $\frac{1}{2} \psi_i\left(\alpha_i - \frac{q_i\gamma_i}{\psi_i}\right)$ or $\frac{1}{2} \psi_i\beta_i$ for each targeted or non-targeted receiver, respectively. We can thus investigate the coding strategy, maximizing the expected edge set size evolution, by studying these two components in the next section.

\section{Analysis of the Expected Coding Opportunity Evolution} \label{sec:receiver-selection-strategy}
To identify a strategy maximizing the expected number of coding opportunities, we will analyze two main components affecting the derived expression in \eqref{eq:expected-edge-set-size-evolution}.

\subsection{Vertex Disappearance}
The disappearance of vertices and its attributed loss of their adjacent edges is a natural outcome of targeting the packet requests of the different receivers throughout the recovery transmission process and is unavoidable. Nonetheless, we can still reduce the effect of this loss component by serving the vertices with smaller degrees. The following theorem compares the expected vertex degrees of two receivers given the sizes of their Wants sets.
\begin{theorem} \label{th:degree-comparison}
If $\psi_i > \psi_h$, then $\mathds{E}\left[\Delta_h\right] > \mathds{E}\left[\Delta_i\right]$.
\end{theorem}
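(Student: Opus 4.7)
The plan is to prove the claim by a direct algebraic comparison of the two expected degrees, using the closed-form expression furnished by Theorem~\ref{th:expected-edge-set-size}. The crucial preliminary observation is that $\varrho_j=N-\psi_j$ for every receiver $j$, so the hypothesis $\psi_i>\psi_h$ is equivalent to $\varrho_h>\varrho_i$; consequently both the ``$\psi$ gap'' and the ``$\varrho$ gap'' point in the same direction, which is why strict monotonicity should come out cleanly.

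First, I would substitute the formula from Theorem~\ref{th:expected-edge-set-size} and form the difference
\begin{equation*}
\mathds{E}[\Delta_h]-\mathds{E}[\Delta_i]
=\sum_{k\neq h}\frac{\psi_k}{N}\Bigl(1+\frac{\varrho_k\varrho_h}{N-1}\Bigr)
-\sum_{k\neq i}\frac{\psi_k}{N}\Bigl(1+\frac{\varrho_k\varrho_i}{N-1}\Bigr).
\end{equation*}
The next step is to split each of the two sums into the common range $k\notin\{i,h\}$ together with a single ``exceptional'' index: the term $k=i$ appears in the first sum (since $i\neq h$) but not in the second, and symmetrically the term $k=h$ appears in the second but not in the first. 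This bookkeeping is the only place where one has to be careful.

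On the common range, the constant $1$'s cancel pairwise, and each index $k\notin\{i,h\}$ contributes a residual of $\dfrac{\psi_k\varrho_k(\varrho_h-\varrho_i)}{N(N-1)}$, which is non-negative because $\varrho_h>\varrho_i$ and $\psi_k,\varrho_k\ge 0$. The two exceptional terms combine into
\begin{equation*}
\frac{\psi_i}{N}\Bigl(1+\frac{\varrho_i\varrho_h}{N-1}\Bigr)-\frac{\psi_h}{N}\Bigl(1+\frac{\varrho_h\varrho_i}{N-1}\Bigr)
=\frac{\psi_i-\psi_h}{N}\Bigl(1+\frac{\varrho_h\varrho_i}{N-1}\Bigr),
\end{equation*}
which is strictly positive by hypothesis. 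Adding the two contributions yields $\mathds{E}[\Delta_h]-\mathds{E}[\Delta_i]>0$, which is exactly the claim.

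There is no substantive obstacle here; the only pitfall is the asymmetry between the two index sets $\{k\neq h\}$ and $\{k\neq i\}$, so I would be deliberate about separating the shared range from the two exceptional indices before simplifying. The proof is essentially a one-line identity once those indices are accounted for, and the strict inequality is guaranteed by the exceptional pair alone, independent of how many other receivers are present.
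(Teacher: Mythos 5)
Your proof is correct and follows essentially the same route as the paper's: both split off the two exceptional indices $k=i$ and $k=h$, compare the common range $k\notin\{i,h\}$ termwise using $\varrho_h>\varrho_i$, and obtain the strict inequality from the exceptional pair via $\psi_i>\psi_h$. (Your sign for the $\varrho$ gap is the right one; the paper's remark that $\psi_i>\psi_h$ implies $\varrho_h<\varrho_i$ is a typo, as its own subsequent inequality chain requires $\varrho_h>\varrho_i$.)
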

\begin{proof}
The proof can be found in \appref{app:degree-comparison}.
\end{proof}
Now, if $q_i<q_h$ and $\psi_i>\psi_h$, $\frac{1}{2} q_i\mathds{E}\left[\Delta_i\right] < \frac{1}{2} q_h\mathds{E}\left[\Delta_h\right]$. Consequently, serving receivers with largest Wants sets and
erasure probabilities results in a smaller loss in the resulting edge set size.

\subsection{Degrees of Remaining Vertices}
To study the factors affecting the evolution of the degrees of remaining vertices, we introduce the following theorem.
\begin{theorem} \label{th:alpha-beta-gamma}
If $\psi_i > 0$, $\psi_k > 1,~\forall~k\in\mathcal{T}\setminus i$ and $\psi_k \leq \varrho_k,~\forall~k\in\mathcal{T}\setminus i$, then $\alpha_i - \frac{q_i\gamma_i}{\psi_i} \geq \beta_i$ for any $i\in\mathcal{M}$.
\ignore{\begin{equation}
\alpha_i - \frac{q_i\gamma_i}{\psi_i} \geq \beta_i\;.
\end{equation}}
\end{theorem}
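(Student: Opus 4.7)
The plan is to rearrange the difference $D := \alpha_i - \frac{q_i\gamma_i}{\psi_i} - \beta_i$ into a form where all surviving pieces are manifestly non-negative under the hypotheses. First I would substitute the definitions from \thref{th:expected-edge-set-size-evolution} and group terms by whether they come from the full $\mathcal{M}$-sum (the $\xi_k$ contributions) or the $\mathcal{T}$-sum (the $\Phi_k$ contributions). The $\xi_k$ contributions collapse into $q_i(1 - 1/\psi_i)\sum_{k\neq i}\xi_k$, while the $\Phi_k$ contributions become a sum over $k\in\mathcal{T}\setminus i$ of the combination $-\Phi_k(q_i) + \Phi_k(0) + \frac{q_i}{\psi_i}\Phi_k(1)$.

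The crucial structural observation is that $\Phi_k(x)$ is affine in $x$: writing $\Phi_k(x) = A_k + B_k x$ with $A_k = \Phi_k(0)$ and $B_k = \frac{q_k(\varrho_k - \psi_k + 1)}{N(N-1)}$, the combination above simplifies to $\frac{q_i}{\psi_i}\bigl[A_k - (\psi_i-1)B_k\bigr]$. Pulling out the positive factor $q_i/\psi_i$, what remains to be shown is
\[
(\psi_i - 1)\left[\sum_{k\neq i}\xi_k - \sum_{k\in\mathcal{T}\setminus i} B_k\right] + \sum_{k\in\mathcal{T}\setminus i} A_k \;\geq\; 0\;.
\]
The hypothesis $\psi_k \leq \varrho_k$ guarantees $B_k \geq 0$ and also $A_k \geq 0$ (both factors inside $\Phi_k(0)$ are non-negative), so the $\sum A_k$ term is already harmless.

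Next I would handle the coefficient $(\psi_i - 1)$. Since $\psi_i > 0$ forces $\psi_i \geq 1$, this coefficient is non-negative; the case $\psi_i = 1$ is immediate because the bracket drops out entirely and $\sum A_k \geq 0$ closes it. For $\psi_i \geq 2$, it suffices to show the bracketed quantity is non-negative, and since $\mathcal{T}\setminus i \subseteq \mathcal{M}\setminus i$ with all $\xi_k, B_k \geq 0$, this reduces to the pointwise inequality $\xi_k \geq B_k$ for each $k \in \mathcal{T}\setminus i$. A direct subtraction gives
\[
\xi_k - B_k = \frac{\psi_k\varrho_k - q_k(\varrho_k - \psi_k + 1)}{N(N-1)}\;,
\]
which, using $q_k \leq 1$ and $\varrho_k \geq \psi_k - 1$ (a consequence of $\psi_k \leq \varrho_k$), is bounded below by $(\psi_k - 1)(\varrho_k + 1)/[N(N-1)] \geq 0$ whenever $\psi_k \geq 1$ — and in particular under the stated $\psi_k > 1$.

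The main obstacle is purely the bookkeeping in the first step: the definitions of $\alpha_i$, $\beta_i$, $\gamma_i$ each mix a full $\mathcal{M}$-sum with a $\mathcal{T}$-sum, so extracting the clean factorization above takes care. Once the affine structure of $\Phi_k(x)$ in $x$ is exploited to produce the $(\psi_i - 1)B_k - A_k$ pattern, the remaining sign analysis is elementary, and the roles of each hypothesis ($\psi_i > 0$ to handle $\psi_i - 1 \geq 0$, $\psi_k > 1$ and $\psi_k \leq \varrho_k$ to force $\xi_k \geq B_k$) become transparent.
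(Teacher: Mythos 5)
Your proof is correct, and it follows the same basic strategy as the paper's: form the difference $\alpha_i - \frac{q_i\gamma_i}{\psi_i} - \beta_i$, rearrange it algebraically, and verify termwise non-negativity. The difference is in the decomposition. The paper splits the difference into three separate sums (a non-targeted $\xi_k$ sum, a targeted sum pairing $q_i\xi_k(1-\frac{1}{\psi_i})$ against $q_i q_k(\varrho_k-\psi_k+1)/[N(N-1)]$, and a $\frac{q_i}{\psi_i}\Phi_k(1)$ sum) and shows each is non-negative; its key inequality is $\psi_k\left(1-\frac{1}{\psi_i}\right)\geq q_k$, which needs \emph{both} $\psi_i>1$ and $\psi_k>1$, so the case $\psi_i=1$ must be patched separately by cross-comparing the second and third sums. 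Your exploitation of the affine structure $\Phi_k(x)=A_k+B_kx$ instead factors the whole difference as $\frac{q_i}{\psi_i}\bigl[(\psi_i-1)(\sum_{k\neq i}\xi_k-\sum_{k\in\mathcal{T}\setminus i}B_k)+\sum_{k\in\mathcal{T}\setminus i}A_k\bigr]$, which buys two things: the $\psi_i=1$ case closes automatically because the $(\psi_i-1)$ coefficient kills the bracket, and the remaining pointwise inequality $\xi_k\geq B_k$, i.e.\ $\psi_k\varrho_k\geq q_k(\varrho_k-\psi_k+1)$, follows from $q_k\leq 1$ and $(\psi_k-1)(\varrho_k+1)\geq 0$ alone — so it actually holds for $\psi_k\geq 1$, showing the hypothesis $\psi_k>1$ on the targeted receivers is slightly stronger than necessary. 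Your version is the tighter and cleaner of the two arguments.
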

\begin{proof}
The proof can be found in \appref{th:alpha-beta-gamma}.
\end{proof}
\ignore{The conditions in the above theorem\ignore{ generally} hold for most of the recovery transmission phase. Indeed, receivers with $\psi_k = 0$ are not in $\mathcal{T}$ in the first place. Receivers with $\psi_k = 1$ are on the edge of completing the frame reception and thus will be less targeted if receivers having larger $\psi_i$ exist. Finally, the condition $\varrho_k \geq \psi_k,~\forall~k\in\mathcal{T}\setminus i$ is usually true for erasure probabilities less than half. Even in the rare cases where this condition does not hold in the\ignore{ very} beginning of the recovery phase, it will definitely hold as the recovery phase progresses. We can thus conclude that the above theorem is valid for most of the recovery phase.}

The above theorem implies that\ignore{, if its condition applies (which is satisfied in most of the recovery phase), then} the increase in the degrees of the remaining vertices of any receiver is larger when it is targeted than when it is not.
Thus, moving a receiver $i$ from the non-targeted set to the targeted set results in adding $\frac{\psi_i}{2} \left(\alpha_i  - \frac{q_i\gamma_i}{\psi_i}  - \beta_i\right)\geq 0$ edges to $\mathcal{G}$. This term is larger when $\psi_i$ is larger, and thus moving a receiver with a larger Wants set to the targeted receiver set adds more edges to the primary graph than moving a receiver with a smaller Wants set. Consequently, a larger increase in the expected edge set size is obtained when targeting the maximum number of receivers having larger Wants sets.

Another important insight about the values of $\alpha_i$ and $\beta_i$ can be inferred from a closer look at their components $\Phi_k(q_i)$ and $\Phi_k(0)$, respectively. Since the terms $\sum_{\substack{k\in\mathcal{T}_\rho(\kappa)\\k\neq i}}\Phi_k(q_i)$ and $\sum_{\substack{k\in\mathcal{T}_\rho(\kappa)\\k\neq i}}\Phi_k(0)$ are subtractive terms from $\alpha_i$ and $\beta_i$, respectively, selecting the receivers, with smaller $\Phi_k(q_i)$ and $\Phi_k(0)$, as targeted receivers, increases the values of $\alpha_i$ and $\beta_i$, respectively. Now, if $q_k<q_h$, $\psi_k > \psi_h$, we have:
\begin{equation}
q_k\left(\varrho_k-\psi_k+1\right) < q_h\left(\varrho_h-\psi_h+1\right) \qquad \ignore{\\}
\Rightarrow \qquad \Phi_k(q_i) < \Phi_h(q_i) \quad\mbox{and}\quad \Phi_k(0) < \Phi_h(0) \;.
\end{equation}
Consequently, the receivers having larger Wants sets and erasure probabilities have smaller values of $\Phi_k(q_i)$ and $\Phi_k(0)$, and thus targeting them increases $\alpha_i$ and $\beta_i$. This result also makes the term $\frac{q_i\gamma_i}{\psi_i}$ negligible in the $\left(\alpha_i-\frac{q_i\gamma_i}{\psi_i}\right)$ term for targeted receivers having larger Wants sets and erasure probabilities.

From \thref{th:alpha-beta-gamma} and the above observations on the values of $\alpha_i$'s and $\beta_i$'s, we can conclude that targeting the maximum number of receivers with largest Wants sets and erasure probabilities results in a larger increase in the degrees of the remaining vertices in the IDNC graph.

\ignore{From these two introduced theorems and the above discussion, we can draw the following observations. First, the probable disappearance of served vertices will cause a smaller loss in the edge set size, if the maximum number of these vertices belongs to receivers with largest Wants sets\ignore{ and erasure probabilities}. Second, the increases in the expected degrees of the remaining vertices are larger when their inducing receivers are targeted than when they are not. Moreover, the values of these increases become larger when the receivers, with larger Wants sets and erasure probabilities, are targeted. Finally, since these changes in degrees are reflected on average in each of this receiver's vertices in the graph, which appears as multiplications of $\left(\alpha_i-\frac{q_i\gamma_i}{\psi_i}\right)$ and $\beta_i$ by $\psi_i$ in \eqref{eq:expected-edge-set-size-evolution}, a larger increase in the resulting expected edge set size is obtained when targeting the maximum number of receivers having largest Wants sets. Although targeting a larger number of receivers will in general cause more edge loss due to vertex disappearance, it decreases the denominator of the coding density expression and tends to increase the expected degrees of the remaining vertices. Thus, the overall coding density of the graph is expected to be increased.}

\subsection{Overall Strategy}
From the above two sections, we can conclude that both factors identified from \thref{th:edge-set-size-evolution} achieve a larger contribution in the number of edge set size at time $t+1$, with respect to its value at time $t$, when the sender targets the maximum number of receivers with largest Wants sets and erasure probabilities. We will refer to this strategy as the \emph{Worst Receiver Targeting (WoRT) strategy}. We can express this strategy as choosing the maximal clique $\kappa^*$ in each transmission such that:
\begin{equation} \label{eq:WoRT}
\kappa^* = \arg\min_{\kappa \in \mathcal{G}} \sum_{i|v_{ij}\in\kappa} \left(\frac{\psi_i}{q_i}\right)^n \;,
\end{equation}
where $n$ is a biasing factor.

\section{Simulations} \label{sec:simulations}
In this section, we test, through simulations, the performances of our identified strategies in maximizing the coding density of the system during the transmission of a frame, and compare them to other well-known strategies. The simulation scenario consists of $M$ receivers having different packet erasure probabilities\ignore{ ranging from $0.7$ to $0.99$,} while maintaining the average erasure probability ($\varepsilon$) constant. These erasure probabilities are assumed to be fixed during the transmission of a frame but change from frame to frame during the simulation. The tested strategies in this simulations are:
\begin{itemize}
\item RND: Random clique selection \cite{4476183}.
\item MC: Maximum clique selection \cite{4660042}.
\item MWC-R: Maximum weighted clique (MWC) selection, in which the weight of vertex $v_{ij}$ is defined as the reception probability of receiver $i$.
\item MoWPS: Maximum weight clique selection defined as in \eqref{eq:MoWPS}.
\item WoRT: Maximum weight clique selection defined as in \eqref{eq:WoRT}.
\end{itemize}
All figures represent the average coding density of the graph after each transmission, when the corresponding strategy is employed during all the recovery transmission phase. This average is computed over a large number of iterations, in each of which we compute the graph density after each transmission. We then average all the densities evaluated at the same transmission index.

\subsection{Results with Optimal MWC Algorithm}
\fref{fig:CE-N20-VarGrph} depicts the coding density evolution inside the IDNC graph for $M = 50$ and $N=20$. The erasure probabilities of different receivers take values in the range of 0.01 to 0.3, such that $\varepsilon =0.15$.
\begin{figure}[t]
\centering
  \includegraphics[width=0.75\linewidth]{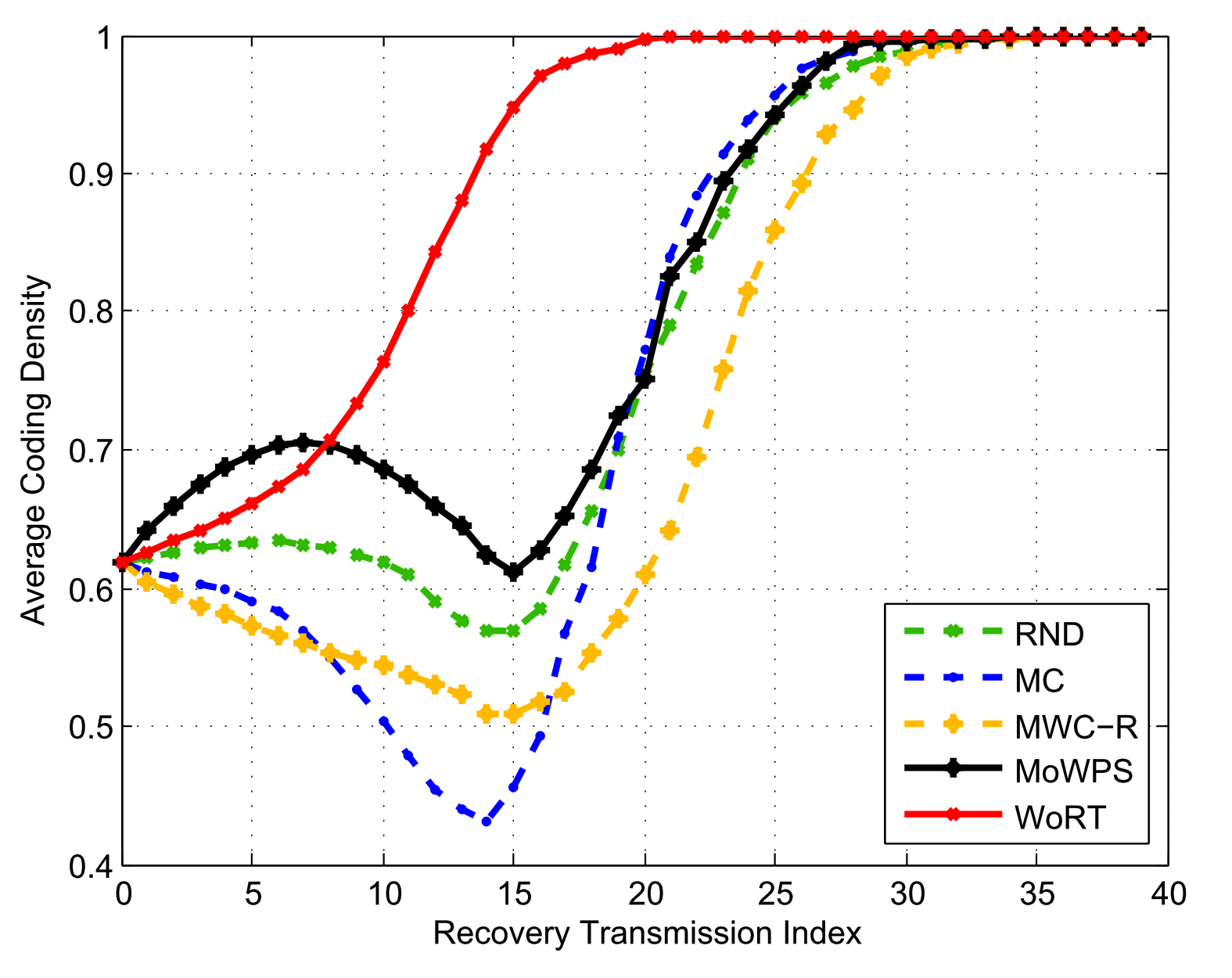}\\
  \caption{Average coding density evolution for $M=50$ and $N=20$}\label{fig:CE-N20-VarGrph}
\end{figure}
From \fref{fig:CE-N20-VarGrph}, we can draw the following observations. As expected, the MoWPS strategy considerably increases the coding density for the first 25$\%$ of the recovery transmissions (i.e. transmissions 1 to 8) due to the presence of packets requested by a large number of receivers during this period. However, due to the reduction of these numbers after several transmissions, the strategy cannot continue increasing the coding opportunities. Indeed, the restriction of serving such packets when there are none forces the algorithm to serve less vertices, which results in a smaller coding density, as shown in the intermediate 22$\%$ of the recovery phase (i.e. transmissions 9 to 15). Towards the end of the phase, the number of requests is naturally reduced and the Has sets become very close to $N$.\ignore{ From \thref{th:degree-comparison}, the number of vertices with larger degrees thus increases, which in turn increases the clique sizes and thus number of targeted receivers in each transmission.} This naturally increases the coding opportunities due to Condition C2 in \sref{sec:IDNC-graph}. Consequently, the remaining transmissions target most of the receivers, which both further increases the coding opportunities and reduces the number of vertices in the graph, thus increasing the coding density.

As for the effect of receiver selection, we can clearly see that the WoRT strategy considerably outperforms all other receiver selection strategies. Moreover, we can see that it monotonically increases the coding density, which proves its ability to increase the coding density over the full course of the frame transmission. This performance is supported by the fact that a single transmission can reduce the Wants set of any receiver by at most one. Consequently, there will always exist some receivers with larger Wants set sizes during most of the recovery phase, which can be targeted to continuously increase the number of coding opportunities and the coding density.

Another important result is that the MC strategy, serving the maximum number of receivers (or requests) in each transmission, and widely considered in most opportunistic network coding works, results in a very bad evolution of coding density. We can infer from \fref{fig:CE-N20-VarGrph} that this strategy serves very few large cliques in the very beginning of the recovery phase and then is left with smaller cliques to serve, until it is close to the completion of frame delivery. This can be explained from \eqref{eq:degree-comparison}, showing the high adjacency of vertices with smaller Wants sets. Consequently, the MC strategy mostly targets receivers with smaller Wants sets, which clearly reduces the edge set size according to \sref{sec:receiver-selection-strategy}.\ignore{However, towards the end of the recovery phase, all techniques catch up with their efficiency similar to MOWPS.} Despite the common intuition that this strategy can optimize different network coding parameters, this result shows that it may not be able to truly do so. We will show one example of this fact in the next section.

Figures \ref{CD-M} and \ref{CD-N} depict the coding density evolution inside the IDNC graph for different values of $M$ (when $N=20$ and $\varepsilon =0.15$) and $N$ (when $M=50$ and $\varepsilon =0.15$), respectively.

\begin{figure}[p]
  \centering
  \includegraphics[width=0.75\linewidth]{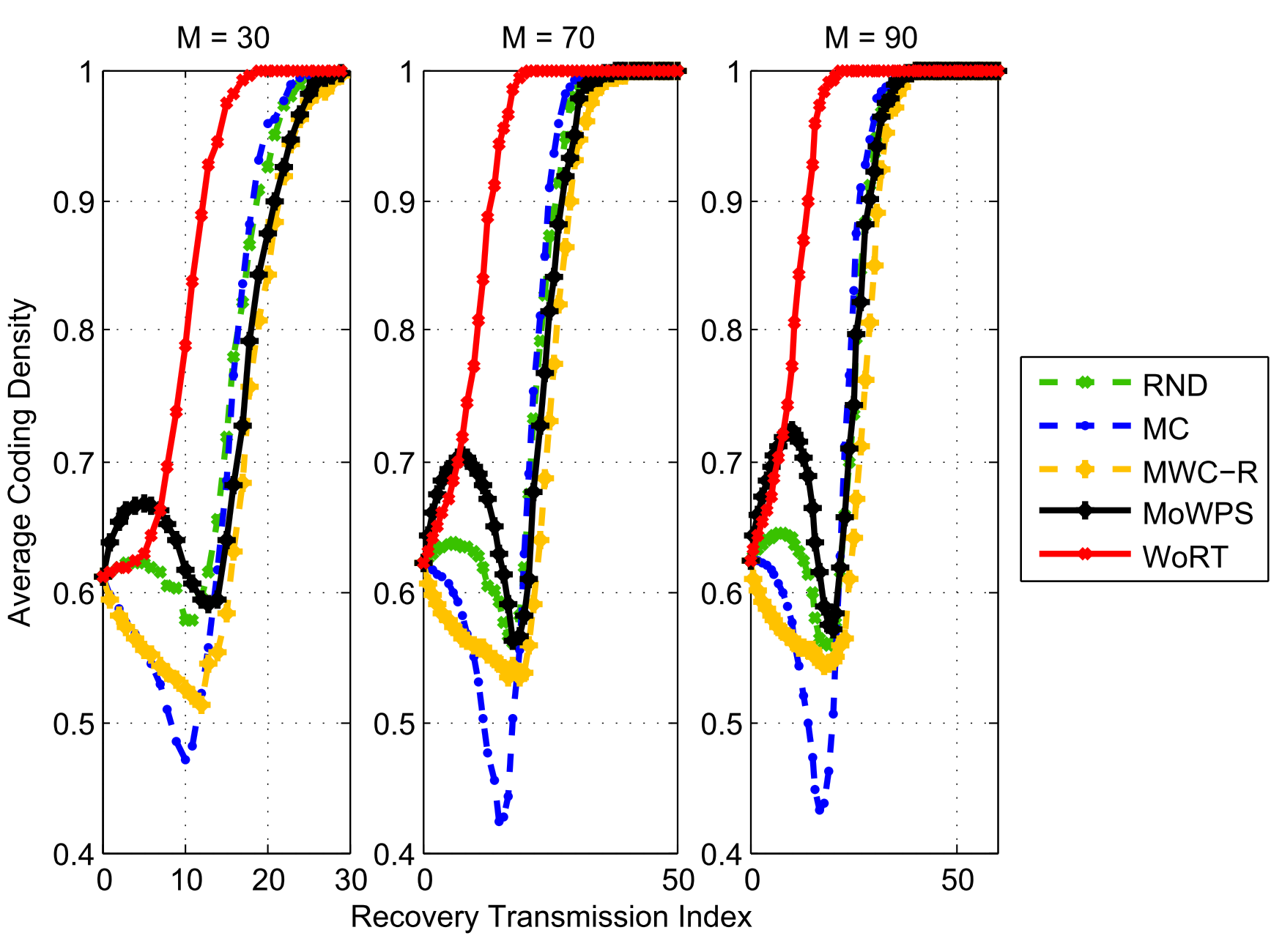}\\
  \caption{Average coding density evolution for different values of $M$, when $N=20$ and $\varepsilon=0.15$}\label{CD-M}
\end{figure}

\begin{figure}
  \centering
  \includegraphics[width=0.75\linewidth]{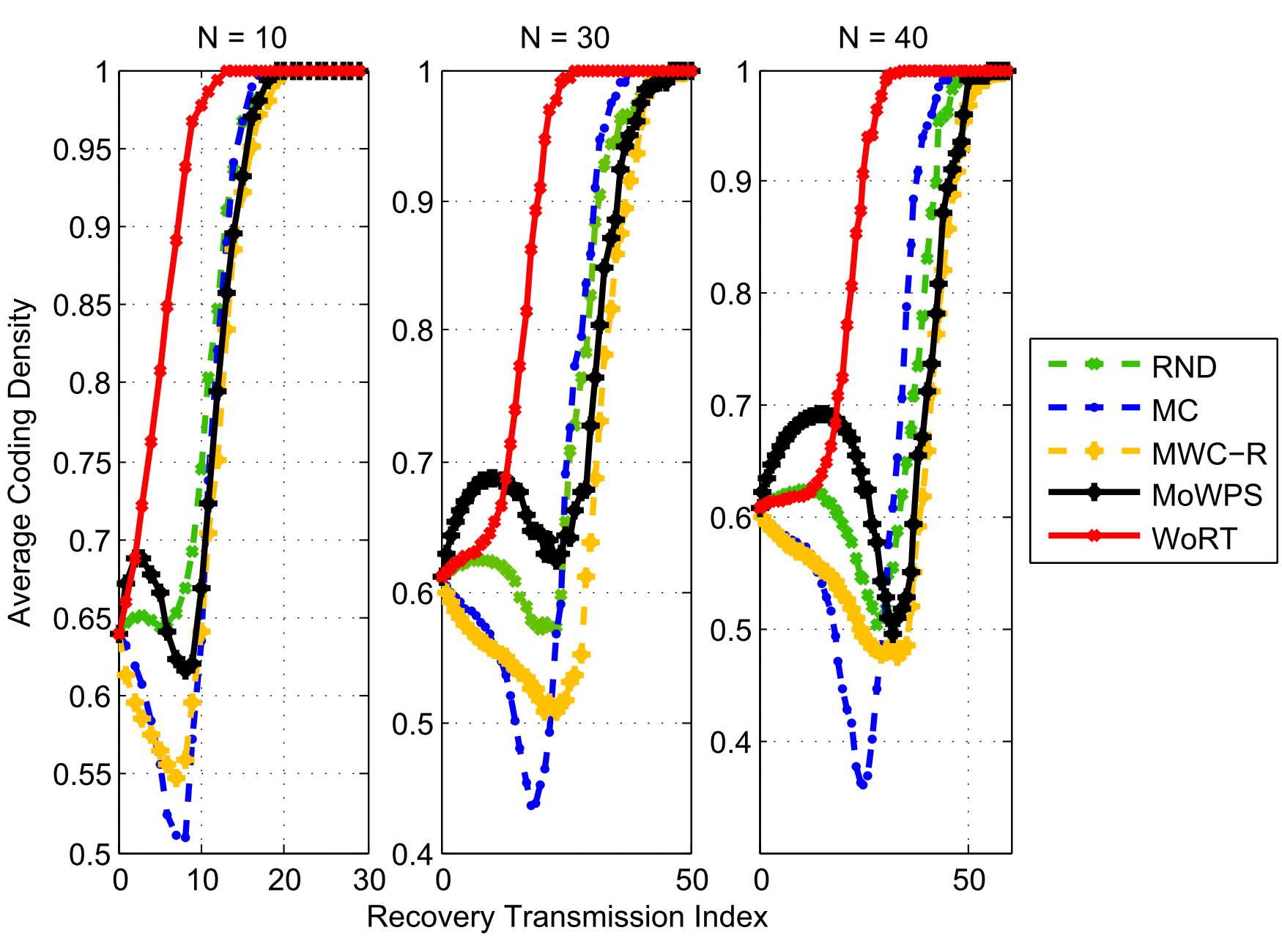}\\
  \caption{Average coding density evolution for different values of $N$, when $M=50$ and $\varepsilon=0.15$}\label{CD-N}
\end{figure}
We can see from both figures that all the observation and conclusions deduced from \fref{fig:CE-N20-VarGrph} hold for various values of $M$ and $N$.

\subsection{Results with Greedy MWC Algorithm}
Since the optimal solution of MWC selection problem is NP-hard, we test the performance of a greedy MWC algorithm, which adds the vertex with the highest weight to the output clique in each iteration. For a better representation of the adjacency effect on the vertex selection in this greedy approach, we modify the weight of each vertex $v_{ij}$, having original weight $w_{ij}$, to be:
\begin{equation}
\omega_{ij} = w_{ij} \cdot \sum_{\forall~v_{kl}\in\mathcal{V}} \;I_{v_{kl}\in \mathcal{V}_{ij}}\cdot w_{kl}
\end{equation}
where $\mathcal{V}_{ij}$ is the set of adjacent vertices to $v_{ij}$ and where $I_x$ is an indicator function, which is equal to one if $x$ is true and zero otherwise. Consequently, these new vertex weights reflect not only their individual weights $w_{ij}$ but also their adjacency to a large number of vertices having high individual weights. Since these modified weights need to be recomputed after each iteration to reflect the new adjacency conditions, the complexity of this greedy algorithms is $O(M^2N)$.

\fref{fig:CE-N20-srh} depicts the coding density evolution of optimal and greedy MWC algorithms for $M = 50$, $N=20$ and $\varepsilon=0.15$. In this figure, we only consider the MoWPS and WoRT strategies.
\begin{figure}
\centering
  \includegraphics[width=0.75\linewidth]{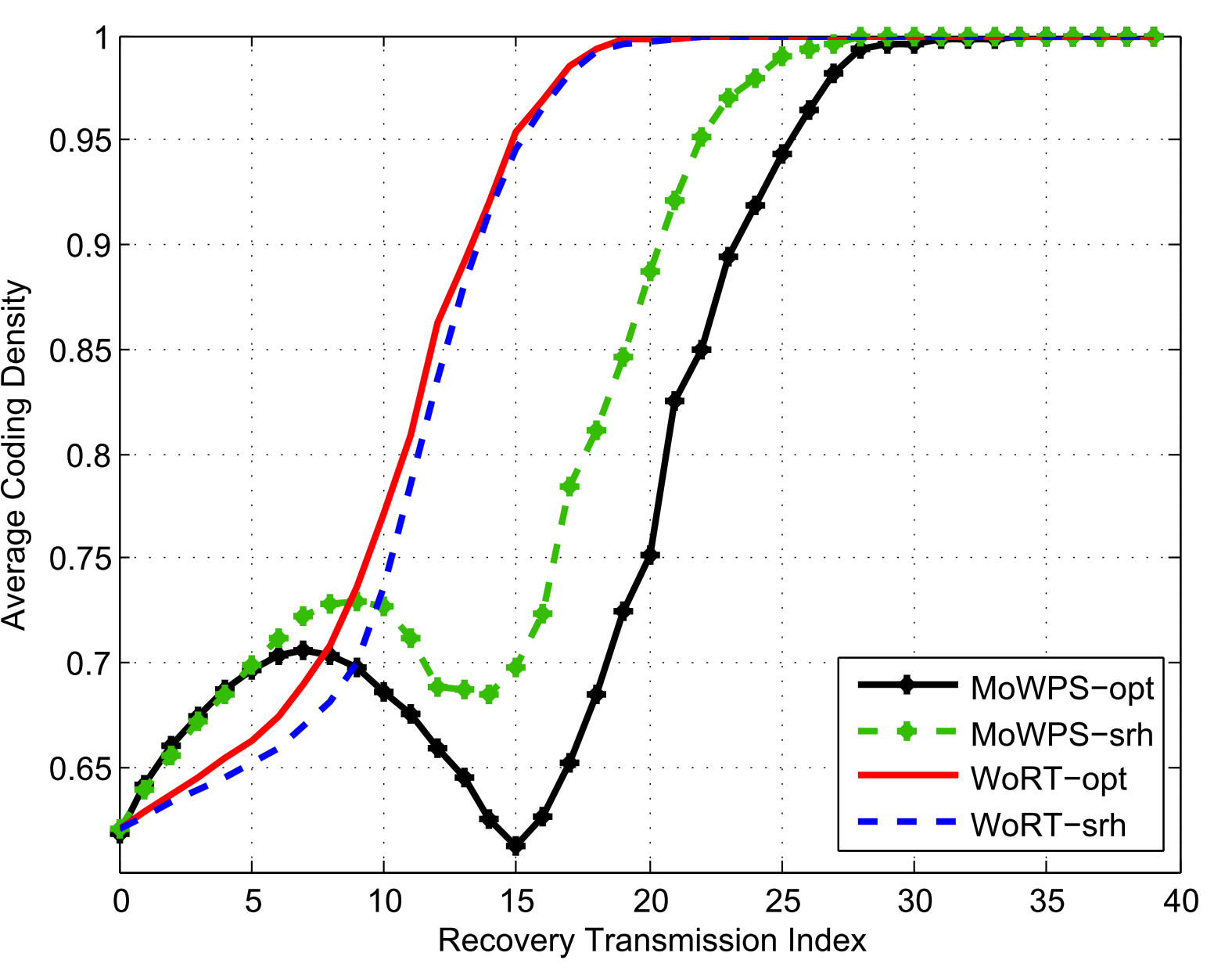}\\
  \caption{Average coding density evolution for $M=50$ and $N=20$.}\label{fig:CE-N20-srh}
\end{figure}
For the WoRT strategy, we can see that the greedy algorithm achieves the same monotonically increasing performance as the optimal algorithm, with a slight gap between them\ignore{ in the mid-range of the recovery transmissions}. For the MoWPS strategy, we notice that the greedy algorithm follows the same trend of the optimal algorithms but achieves larger coding density after the first 7 transmissions. This can be explained by the fact that the greedy algorithm loses some chances of serving packets with largest demands in the first few transmissions, compared to the optimal algorithm. This luckily makes it less affected by the degradation phenomenon that happens to the optimal algorithm, as explained above. This effect both preserves better coding opportunities for the greedy algorithm in the intermediate range of transmissions and allows a faster boost up towards the end.

\subsection{Effect of Erasure Probabilities}
Since the receiver selection in the WoRT strategy greatly depends on their erasure probabilities, we test its performance for several erasure probabilities $\varepsilon_w = [0.3, 0.5, 0.7, 0.9]$ in \fref{fig:erasure-test}.
\begin{figure}
\centering
  \includegraphics[width=0.75\linewidth]{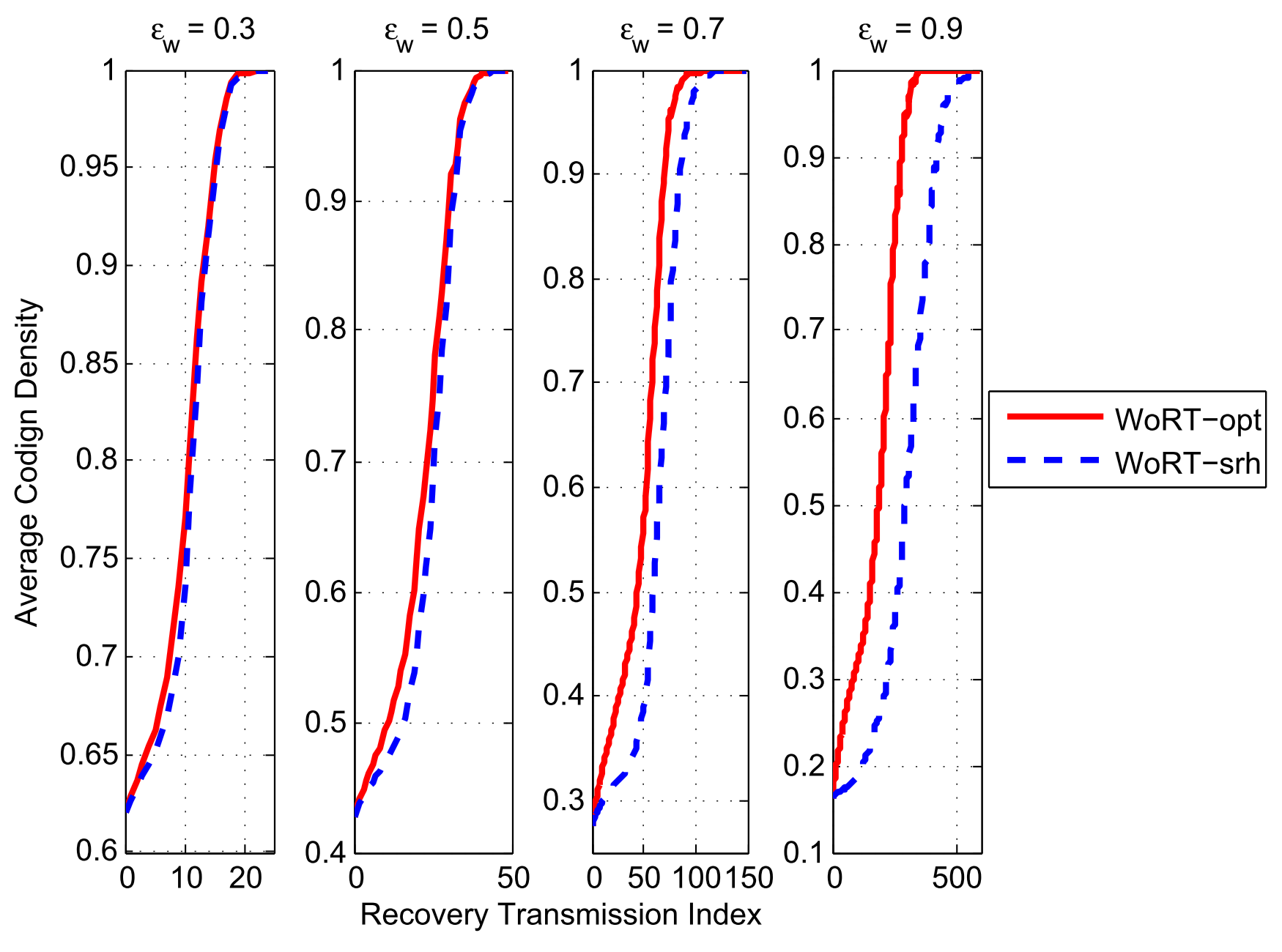}\\
  \caption{Average coding density evolution for different $\varepsilon_w$ ($M=50$ and $N=20$).}\label{fig:erasure-test}
\end{figure}
We can see from the figure that the performances of both the optimal and greedy MWC algorithms always achieve the same monotonically increasing trend of average coding densities for all these diverse values of erasure probabilities. We also notice a larger delay between the optimal and greedy algorithms as the erasure probability increases. This result is expected as the greedy algorithm will naturally degrade more in the overall performance as it runs for a larger number of recovery transmissions at high erasure probabilities.

\section{Case Studies}  \label{sec:case-study}
\subsubsection{Completion Delay}
In this section, we study the effect of maximizing the coding density on reducing the average completion delay (i.e. the number of recovery transmissions) in IDNC. Intuitively, one can think that the optimal IDNC completion delay can be achieved by serving the maximum number of vertices in each recovery transmission (i.e. the MC strategy), as this should apparently deplete the graph faster. However, we have shown that the MC strategy suffers from severe degradation in the coding density compared to the WoRT strategy. We will thus compare the completion delay of these two strategies to see whether increasing the coding density in earlier transmissions is important in reducing the frame completion delay.

\fref{fig:CD_Comparison} depicts the comparison of the average completion delays achieved by the RND, MC, MoWPS and WoRT algorithms to global optimal completion delay over all linear network codes. The upper subfigure illustrates this comparison against $M$, for $N = 40$ and $\varepsilon=0.15$, whereas the lower one compares the performances against $N$ for $M=40$ and $\varepsilon=0.15$. Both subfigures show that the WoRT strategy considerably outperforms all other strategies including the MC strategy, especially as $M$ and $N$ increase. We can also see that WoRT strategy achieves a near-optimal completion delay performance.
\begin{figure}
\centering
  \includegraphics[width=0.75\linewidth]{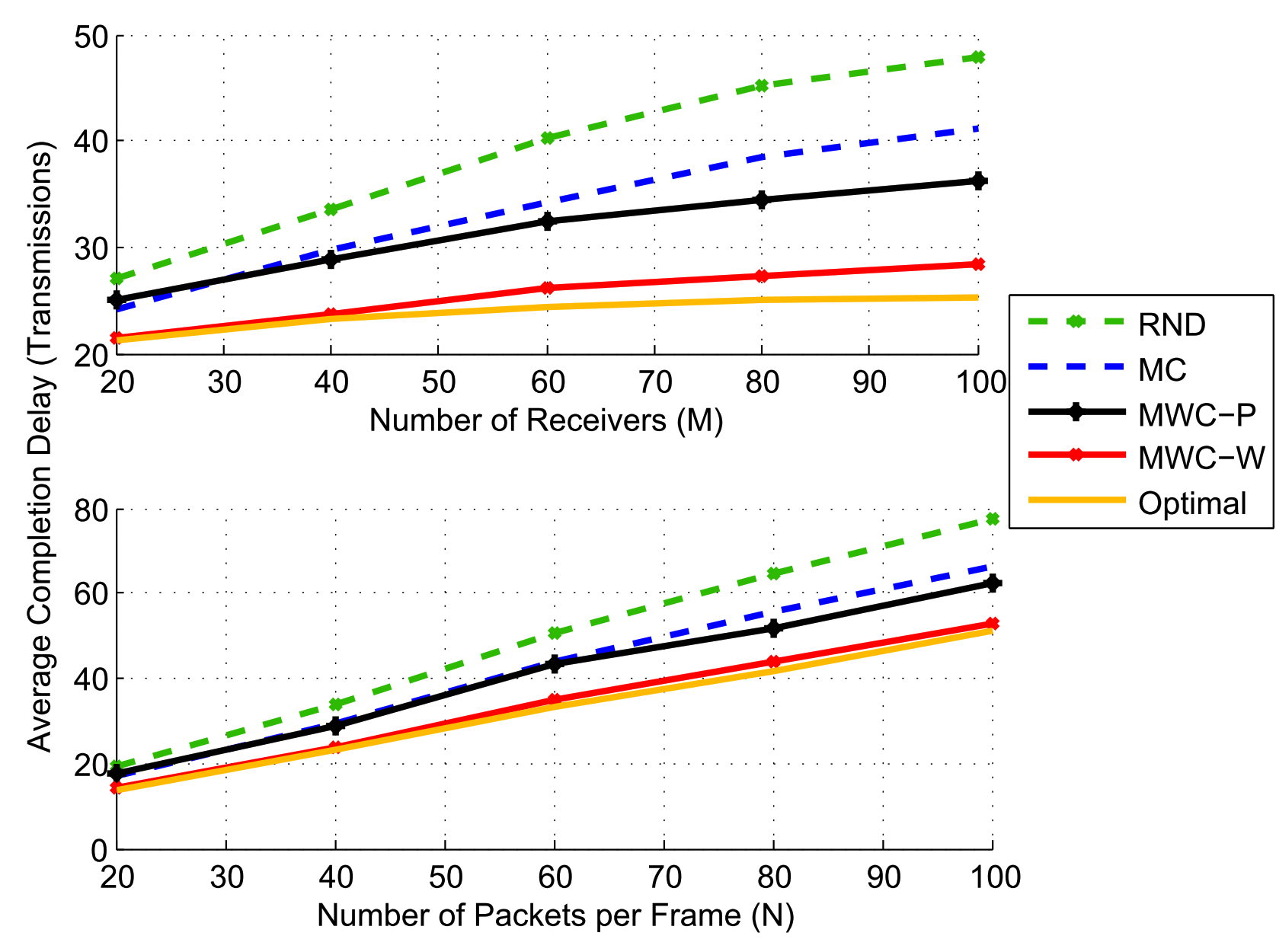}\\
  \caption{Comparison of average completion delays against $M$ and $N$.}\label{fig:CD_Comparison}
\end{figure}
These results clearly show that the MC strategy cannot achieve a low completion delay, due to the effect explained in \sref{sec:simulations}. On the other hand, the WoRT strategy, which serves relatively smaller cliques in the beginning of the recovery phase, achieves a better completion delay as it persistently increases\ignore{ the adjacency of the vertices belonging to the receivers with largest Wants sets and erasure probabilities} the coding density in the graph. Thus, it always finds large cliques to serve in later recovery transmissions, and thus complete their delivery much faster.

\subsection{Receiver Goodput}
We formally define the receiver goodput as the ratio between the number of useful packets (i.e. new source packets or instantly decodable recovery packets) received by a given receiver to the total number of received packets by this receiver, along a large number of frames. This definition does not count channel inflicted useless transmissions due to erasures, but
rather counts sender inflicted useless transmissions, when its coding algorithm is not able to provide instantly-decodable packets to different receivers. In other words, the receiver goodput can be viewed as an erasure independent received throughput.

\fref{fig:OGP_Comparison} depicts the comparison of the average receiver goodput achieved by the RND, MC, MoWPS and WoRT algorithms. The upper and lower subfigures illustrates this comparison for the same parameters of the upper and lower subfigures of \fref{fig:CD_Comparison}
\begin{figure}
\centering
  \includegraphics[width=0.75\linewidth]{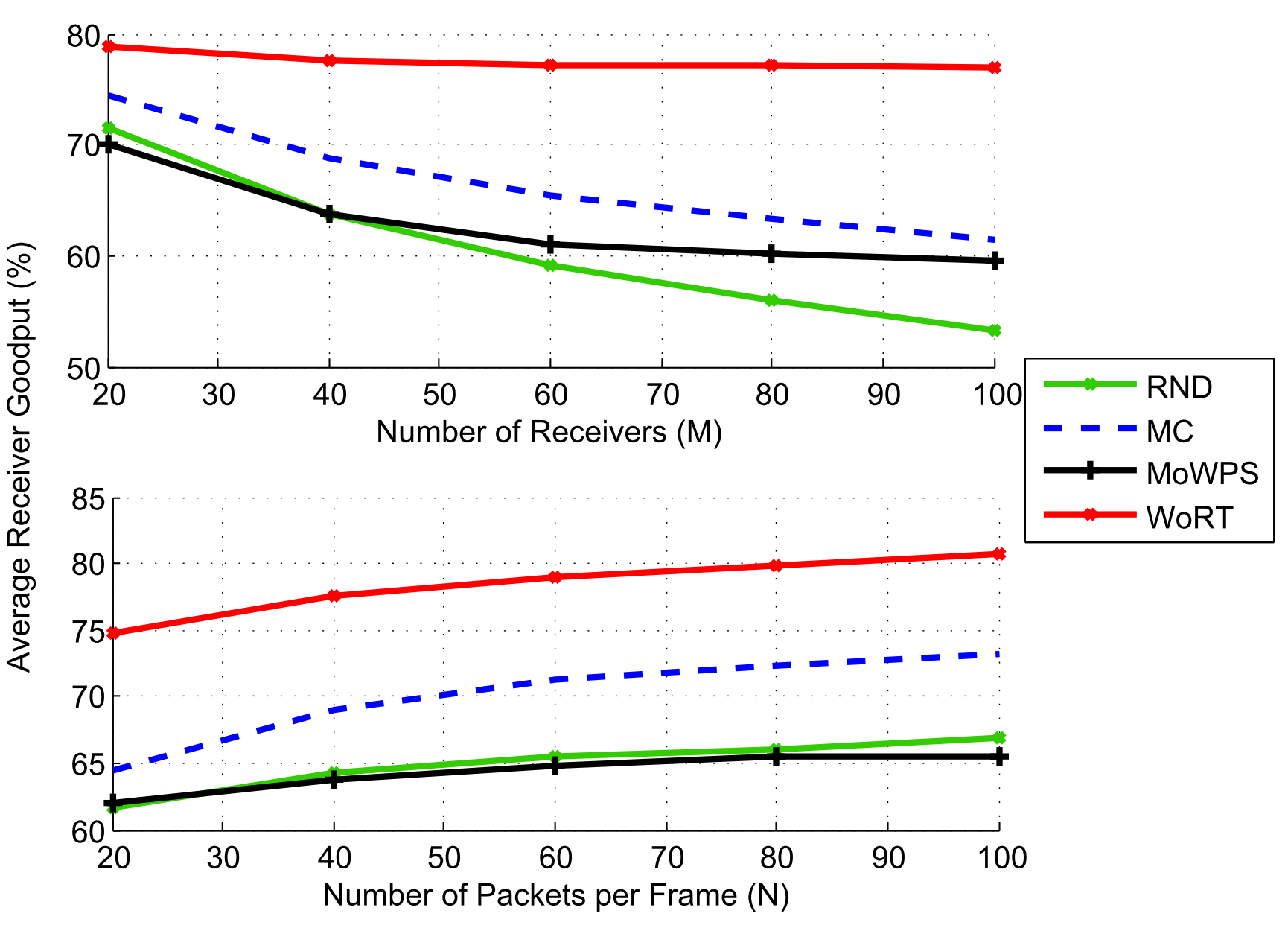}\\
  \caption{Comparison of average receiver goodputs against $M$ and $N$.}\label{fig:OGP_Comparison}
\end{figure}
Again the results show that the WoRT strategy achieves the best goodput performance, due its ability to persistently increase high coding density in the IDNC graph, which allows to continuously provide useful packets to a large number of receivers all along the recovery transmission phase. All other schemes lag behind because they are less efficient in maintaining high coding density, and thus end up sending transmissions that are useful to a limited number of receivers.

\section{Conclusion} \label{sec:conclusion}
In this paper, we investigated the receiver and packet selection strategies that densify the IDNC coding opportunities in wireless broadcast. We first derived an expression for the exact evolution of the edge set size in the IDNC graph, after the transmission of any arbitrary coded packet. From this expression, we showed that sending commonly wanted packets for all the receivers can maximize the number of coding opportunities. Since guaranteeing such property in IDNC is usually impossible, especially after the first few transmission, this strategy does not guarantee the continuous increase of the coding density. This observation has been later demonstrated through extensive simulations. Consequently, we further investigate the problem by deriving an expected expression of the edge set size evolution, after ignoring the feedback set contents and keeping their cardinalities. We then employed this expression to show that targeting the maximum number of receivers having the largest Wants sets and erasure probabilities tends to both maximize the expected number of coding opportunities and monotonically increase the expected coding density. Simulation results justified our theoretical findings. Finally, we demonstrated the validity and importance of increasing the coding density through two case studies on the IDNC completion delay and receiver goodput. The studies showed that the identified WoRT strategy achieves a significantly better performance than the MC strategy (which was intuitively expected to perform better), due to the inability of the latter to preserve high coding density throughout the frame transmission. We thus recommend the observance of the coding density increasing strategies when other network parameters are optimized over the full course of a frame transmission.

\appendices

\section{Proof of \thref{th:edge-set-size}} \label{app:edge-set-size}
It is well known from graph theory that the edge set size of any graph is equal to half the sum of its vertex degrees. Consequently, we will first derive the expression for the degree of an arbitrary vertex $v_{ij}$ in the graph. From the adjacency conditions C1 and C2 in \sref{sec:IDNC-graph}, we conclude the following facts:
\begin{itemize}
\item Vertex $v_{ij}$ is not adjacent to any vertex of the same receive $i$.
\item If $j\in\mathcal{W}_k$, $v_{ij}$ cannot be adjacent to any vertex of receiver $k$ due to violation of C2, except for vertex $v_{kj}$ which arises from C1.
\item If $j\in\mathcal{H}_k$, $v_{ij}$ can be adjacent to any vertex of receiver $k$ (induced from $\mathcal{W}_k$), except for all vertices $v_{kl}$ for which $l\notin\mathcal{H}_i \Rightarrow l\in\mathcal{W}_i$.
\end{itemize}
From these facts, we can express the degree of a vertex $v_{ij}$ as follows:
\begin{equation} \label{eq:vertex-degree}
\Delta_{ij} = \sum_{\substack{k=1\\k\neq i}}^M I_{j\in\mathcal{W}_k} + I_{j\in\mathcal{H}_k}\left(\left|\mathcal{W}_k\right| - \left|\mathcal{W}_k \cap \mathcal{W}_i\right|\right)\;
\end{equation}
Now, the sum $\Sigma \Delta_i$ of all the degrees of the vertices induced by receiver $i$ can be expressed as:
\begin{align} \label{eq:sigma-delta}
\Sigma \Delta_i &= \sum_{j\in\mathcal{W}_i} \sum_{\substack{k=1\\k\neq i}}^M I_{j\in\mathcal{W}_k} + I_{j\in\mathcal{H}_k}\left(\left|\mathcal{W}_k\right| - \left|\mathcal{W}_k \cap \mathcal{W}_i\right|\right) \ignore{\nonumber \\
&} = \sum_{\substack{k=1\\k\neq i}}^M \left(\sum_{j\in\mathcal{W}_i}I_{j\in\mathcal{W}_k}\right) + \sum_{j\in\mathcal{W}_i}I_{j\in\mathcal{H}_k}\cdot \left(\psi_k - \left|\mathcal{W}_k \cap \mathcal{W}_i\right|\right) \nonumber\\
& = \sum_{\substack{k=1\\k\neq i}}^M \left|\mathcal{W}_i \cap \mathcal{W}_k\right|  + \sum_{j\in\mathcal{W}_i}I_{j\in\mathcal{H}_k}\cdot \left(\psi_k - \left|\mathcal{W}_k \cap \mathcal{W}_i\right|\right) \;.
\end{align}
The expression $\sum_{j\in\mathcal{W}_i}I_{j\in\mathcal{H}_k}$ can be easily shown to be equal to $\left|\mathcal{W}_i\cap\mathcal{H}_k\right|$. We can also easily infer that $\mathcal{W}_i\cap\mathcal{H}_k = \mathcal{W}_i\setminus\left(\mathcal{W}_i\cap\mathcal{W}_k\right)$. Consequently, we get:
\begin{equation} \label{eq:set-conversion}
\sum_{j\in\mathcal{W}_i}I_{j\in\mathcal{H}_k} = \left|\mathcal{W}_i\cap\mathcal{H}_i\right| = \left|\mathcal{W}_i\right|-\left|\mathcal{W}_i\cap\mathcal{W}_k\right|\;.
\end{equation}
From \eqref{eq:set-conversion}, \eqref{eq:sigma-delta} and using the definitions in \eqref{eq:edge-set-size}, we get:
\begin{equation}
\left|\mathcal{E}\right| = \frac{1}{2}\sum_{i=1}^M \Sigma \Delta_i \ignore{\nonumber\\
&} = \frac{1}{2}\sum_{i=1}^M\sum_{\substack{k=1\\k\neq i}}^M \left[\psi_{ik}  + \left(\psi_i - \psi_{ik}\right)\left(\psi_k - \psi_{ik}\right)\right]\ignore{\nonumber\\
&} = \frac{1}{2}\sum_{i=1}^M\sum_{\substack{k=1\\k\neq i}}^M \left(\psi_{ik} + \theta_{ik}\theta_{ki}\right)\;.
\end{equation}

\section{Proof of \thref{th:edge-set-size-evolution}} \label{app:edge-set-size-evolution}
When an arbitrary clique $\kappa$, with a given set of targeted receivers $\mathcal{T}$, is chosen for transmission, the values of $\psi_i$, $\psi_k$ and $\psi_{ik}$ in \eqref{eq:edge-set-size} change according to the cases depicted in \tref{tab:evolution-conditions} for each pair of receivers. In this table, $X_{ik}$ is a joint indicator function, which is equal to 1 if either $i$ or $k$ received the transmitted packet and is zero otherwise. It is easy to see that $X_{ik}$ can be expressed as $X_{ik} = X_i + X_k - X_i X_k$.
\begin{table}
\centering
\caption{Evolution cases of the Wants sets of a pair of receivers $i$ and $k$ when the sender transmits an arbitrary clique $\kappa$ targeting a set of receivers $\mathcal{T}$}\label{tab:evolution-conditions}
\begin{tabular}{|c|c|c|c|c|c|}
  \hline
  \multicolumn{3}{|c|}{Cases} & \multicolumn{1}{|c|}{\multirow{2}{*}{$\psi_i$}} & \multicolumn{1}{|c|}{\multirow{2}{*}{$\psi_k$}} & \multicolumn{1}{|c|}{\multirow{2}{*}{$\psi_{ik}$}} \\ \cline{1-3}
  $i$ & $k$ & $p_i/p_k$ & \multicolumn{1}{|c|}{} & \multicolumn{1}{|c|}{} & \multicolumn{1}{|c|}{} \\ \hline \hline
  $\notin\mathcal{T}$ & $\notin\mathcal{T}$ & - & $\psi_i$ & $\psi_k$ & $\psi_{ik}$ \\ \hline
  $\notin\mathcal{T}$ & $\in\mathcal{T}$ & $p_k\notin\mathcal{W}_i$ & $\psi_i$ & $\psi_k - X_k$ & $\psi_{ik}$ \\ \hline
  $\notin\mathcal{T}$ & $\in\mathcal{T}$ & $p_k\in\mathcal{W}_i$ & $\psi_i$ & $\psi_k - X_k$ & $\psi_{ik} - X_k$ \\ \hline
  $\in\mathcal{T}$ & $\notin\mathcal{T}$ & $p_i\notin\mathcal{W}_k$ & $\psi_i - X_i$ & $\psi_k$ & $\psi_{ik}$ \\ \hline
  $\in\mathcal{T}$ & $\notin\mathcal{T}$ & $p_i\in\mathcal{W}_k$ & $\psi_i - X_i$ & $\psi_k$ & $\psi_{ik} - X_i$ \\ \hline
  $\in\mathcal{T}$ & $\in\mathcal{T}$ & $p_i\notin\mathcal{W}_k$ & $\psi_i - X_i$ & $\psi_k - X_k$ & $\psi_{ik}$ \\ \hline
  $\in\mathcal{T}$ & $\in\mathcal{T}$ & $p_i\in\mathcal{W}_k$ & $\psi_i - X_i$ & $\psi_k - X_k$ & $\psi_{ik} - X_{ik}$ \\ \hline
\end{tabular}
\end{table}
Let $Y_{ik}^{(t)} = \psi_{ik} + \theta_{ik}\theta_{ki}$. According to the possible changes in \tref{tab:evolution-conditions}, the double summation in \eqref{eq:edge-set-size}, can be divided into these seven categories. In each category, the values of $\psi_i$, $\psi_k$ and $\psi_{ik}$ in \eqref{eq:edge-set-size} are replaced in their $Y_{ik}^{(t)}$ expression by their corresponding evolved values, depicted in the table, to reach to $Y_{ik}^{(t+1)}$. Applying these changes, using the definition $X_{ik} = X_i + X_k - X_i X_k$, and re-arranging the terms, we get the following results for each category.
\begin{enumerate}
\item For $i\notin\mathcal{T}$, $k\notin\mathcal{T}$, there will be no change in $Y_{ik}^{(t)}$.
\item For $i\notin\mathcal{T}$, $k\in\mathcal{T}$ and $p_k\notin\mathcal{W}_i$, $Y_{ik}^{(t+1)} = \psi_{ik} + \theta_{ik} \left(\theta_{ki} - X_k\right) = Y_{ik}^{(t)} - X_k\theta_{ik}$.
\ignore{\begin{equation}
Y_{ik}^{(t+1)} = Y_{ik}^{(t)} - X_k\theta_{ik}\;.
\end{equation}}
\item For $i\notin\mathcal{T}$, $k\in\mathcal{T}$ and $p_k\in\mathcal{W}_i$, $Y_{ik}^{(t+1)} = \psi_{ik} - X_k + \left(\theta_{ik} + X_k\right) \theta_{ki} = Y_{ik}^{(t)} + X_k\hat{\theta}_{ki}$.
\ignore{\begin{equation}
Y_{ik}^{(t+1)} = Y_{ik}^{(t)} + X_k\left(\theta_{ki} - 1\right)\;.
\end{equation}}
\item For $i\in\mathcal{T}$, $k\notin\mathcal{T}$, we get the same results in the points 2 and 3, for $p_i\notin\mathcal{W}_k$ and $p_i\in\mathcal{W}_k$, respectively, by replacing index $k$ by $i$ and vice versa in the terms $X_k$, $\theta_{ik}$ and $\hat{\theta}_{ki}$.\ignore{ , we get:
\begin{equation}
Y_{ik}^{(t+1)} = Y_{ik}^{(t)} - X_i\theta_{ki}\;.
\end{equation}
\item For $i\in\mathcal{T}$, $k\notin\mathcal{T}$ and $p_k\in\mathcal{W}_i$, we get:
\begin{equation}
Y_{ik}^{(t+1)} = Y_{ik}^{(t)} - X_i\left(\theta_{ik} -1\right)\;.
\end{equation}}
\item For $i\in\mathcal{T}$, $k\in\mathcal{T}$ and $p_i\notin\mathcal{W}_k$, $Y_{ik}^{(t+1)} = \psi_{ik} + \left(\theta_{ik} - X_i\right) \left(\theta_{ki} - X_k\right) = Y_{ik}^{(t)} + X_iX_k - X_i\theta_{ki} - X_k\theta_{ik}$.
\ignore{\begin{equation}
Y_{ik}^{(t+1)} = Y_{ik}^{(t)} + X_iX_k - X_i\theta_{ki} - X_k\theta_{ik}\;.
\end{equation}}
\item For $i\in\mathcal{T}$, $k\in\mathcal{T}$ and $p_i\in\mathcal{W}_k$,
\begin{align*}
 Y_{ik}^{(t+1)} & = \psi_{ik} - X_i - X_k + X_iX_k  + \left(\theta_{ik} + X_k - X_iX_k\right) \left(\theta_{ki} + X_i - X_iX_k\right)\\
 & = Y_{ik}^{(t)} + X_i\hat{\theta}_{ik} + X_k\hat{\theta}_{ki}- X_i X_k \left(\hat{\theta}_{ik} + \hat{\theta}_{ki} - X_i X_k\right)\;.
\end{align*}
\end{enumerate}
The theorem follows by substituting the above equations in \eqref{eq:edge-set-size}.\ignore{, and knowing that the summations of $Y_{ik}{(t+1)}$ and $Y_{ik}{(t)}$ over all cases and all receiver pairs are equal to $\left|\mathcal{E}^{(t+1)}\right|$ and $\left|\mathcal{E}^{(t)}\right|$, respectively}

\section{Proof of \thref{th:two-receivers}} \label{app:two-receivers}
According to the targeting status of any two receivers $i$ and $k$, they will have a pairwise entry in only one of the summations in \eqref{eq:edge-set-size-evolution}. If $i\notin\mathcal{T}$ and $k\in\mathcal{T}$, the number of pairwise edges at $t+1$ $\left(Y_{ik}^{(t+1)}\right)$ is increased by $X_k\hat{\theta}_{ki}$ when $p_k\in\mathcal{W}_i$ and is reduced by $X_k\theta_{ik}$ when $p_k\notin\mathcal{W}_i$. If $i\in\mathcal{T}$ and $k\notin\mathcal{T}$, $Y_{ik}^{(t+1)}$ is increased by $X_i\hat{\theta}_{ik}$ when $p_i\in\mathcal{W}_k$ and is reduced by $X_i\theta_{ki}$ when $p_i\notin\mathcal{W}_k$.\ignore{
Consequently, serving a packet $p_k\in\mathcal{W}_k$, which is also in $\mathcal{W}_i$, increases the edge set size, whereas it reduces it when $p_k\notin\mathcal{W}_i$. Thus, for all non-targeted receivers, the number of pairwise edges with targeted receivers is maximized when the served packets are in their Wants sets.}

If both receivers are targeted, we have one of the following cases:\\
\textbf{Case 1}: $X_i = 0$ and $X_k=0$: $Y_{ik}^{(t+1)} = Y_{ik}^{(t)}$ whether $p_i\in\mathcal{W}_k$ or not.\\
\ignore{\begin{itemize}
\item $Y_{ik}^{(t+1)} = Y_{ik}^{(t)}$ whether $p_i\in\mathcal{W}_k$ or not.
\end{itemize}}
\textbf{Case 2}: $X_i = 1$ and $X_k=0$:
\begin{enumerate}
\item For $p_i\notin\mathcal{W}_k$, $Y_{ik}^{(t+1)} = Y_{ik}^{(t)} - \theta_{ki} \leq Y_{ik}^{(t)} - 1$.
\item For $p_i\in\mathcal{W}_k$, $Y_{ik}^{(t+1)} = Y_{ik}^{(t)} + \hat{\theta}_{ik}$.
\end{enumerate}
\textbf{Case 3}: $X_i = 0$ and $X_k=1$:
\begin{enumerate}
\item For $p_i\notin\mathcal{W}_k$, $Y_{ik}^{(t+1)} = Y_{ik}^{(t)} - \theta_{ik} \leq Y_{ik}^{(t)} - 1$.
\item For $p_i\in\mathcal{W}_k$, $Y_{ik}^{(t+1)} = Y_{ik}^{(t)} + \hat{\theta}_{ki}$.
\end{enumerate}
\textbf{Case 4}: $X_i = 1$ and $X_k=1$:
\begin{enumerate}
\item For $p_i\notin\mathcal{W}_k$, $Y_{ik}^{(t+1)} = Y_{ik}^{(t)} - \theta_{ki} - \theta_{ik} + 1 \leq Y_{ik}{(t)} - 1$.
\item For $p_i\in\mathcal{W}_k$, $Y_{ik}^{(t+1)} = Y_{ik}^{(t)} - 1$.
\end{enumerate}
The two inequalities in Condition 1 of both Cases 2 and 3 (i.e. when $p_i\notin\mathcal{W}_k$) arise from the following two facts:
 \begin{enumerate}
 \item $p_i$ is in $\mathcal{W}_i$ but not $\mathcal{W}_k$.
 \item Both receivers are targeted.
  \end{enumerate}
Fact 1 implies that there must exists at least one packet satisfying this condition and thus:
\begin{equation}
 \theta_{ik} = \psi_i - \psi_{ik} =  \left|\mathcal{W}_i\setminus \mathcal{W}_k\right| \geq 1.
\end{equation}
Fact 2 implies that $k$ is targeted by packet $p_k$ such that:
\begin{itemize}
\item $p_k\neq p_i$: or else $p_i$ must be in $\mathcal{W}_k$, which contradicts with Condition 1 of both Cases 2 and 3.
\item $p_k\notin \mathcal{W}_i$: or else the combination $p_i\oplus p_k$ will not be instantly decodable at $i$, which contradicts with the fact that $i$ is targeted.
\end{itemize}
Consequently, there exists at least one packet $p_k$ which is in $\mathcal{W}_k$ but not $\mathcal{W}_i$ and thus
\begin{equation}
\theta_{ki} = \psi_k - \psi_{ik} = \left|\mathcal{W}_k\setminus \mathcal{W}_i\right| \geq 1.
\end{equation}

The last inequality in Condition 1 of Case 4 for $p_i\notin\mathcal{W}_k$ arises from the fact that packet $p_i$ is in $\mathcal{W}_i$ but not in $\mathcal{W}_k$ and thus cannot be part of their intersection. Consequently, $\psi_i \geq \psi_{ik} + 1~\Rightarrow~\theta_{ik} \geq 1$ and $\psi_k \geq \psi_{ik} + 1~\Rightarrow~\theta_{ki} \geq 1$ and thus the left hand term will always be less than or equal to $Y_{ik}^{(t)} - 1$.

\section{Proof of \thref{th:expected-edge-set-size}} \label{app:graph-density}
To prove this theorem, we need to introduce this lemma, which is proved in \appref{app:expected-degree}.
\begin{lemma} \label{lem:expected-degree}
For any given feedback state\ignore{ $\boldsymbol{\varrho}$, $\boldsymbol{\psi}$ and $\boldsymbol{\psi}$ vectors}, the expected degree of any of the vertices induced by receiver $i$ (denoted by $\Delta_{i}$) is equal to:
\begin{equation}
\mathds{E}\left[\Delta_{i}\right] = \sum_{\substack{k = 1\\k\neq i}}^M\: \frac{\psi_k}{N} \left(1+\frac{\varrho_k\varrho_i}{N-1}\right)\;. \label{eq:primary_expected_degree}
\end{equation}
\end{lemma}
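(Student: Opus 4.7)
The plan is to leverage the per-vertex degree formula derived in Appendix~A, namely
$\Delta_{ij} = \sum_{k\neq i}\bigl(I_{j\in\mathcal{W}_k} + I_{j\in\mathcal{H}_k}(\psi_k - |\mathcal{W}_k\cap\mathcal{W}_i|)\bigr)$, and average it over the randomness in the identities of the Wants sets under the fixed cardinality profile. Since only the cardinalities $\{\psi_k,\varrho_k\}$ are fixed, I would treat each $\mathcal{W}_k$ ($k\neq i$) as a uniformly random $\psi_k$-subset of $\mathcal{N}$, independent of $\mathcal{W}_i$, and condition on $j\in\mathcal{W}_i$ (which is automatic for a vertex of receiver $i$).

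First I would compute the two elementary probabilities: $\Pr[j\in\mathcal{W}_k]=\psi_k/N$ and $\Pr[j\in\mathcal{H}_k]=\varrho_k/N$, both unaffected by the conditioning on $\mathcal{W}_i$ by independence of the subsets. The key remaining quantity is the conditional expectation $\mathds{E}\bigl[|\mathcal{W}_k\cap\mathcal{W}_i|\,\big|\,j\in\mathcal{H}_k,\,j\in\mathcal{W}_i\bigr]$. Conditioning on this event removes $j$ from $\mathcal{W}_k$, so the intersection cannot contain $j$; after stripping $j$ away, $\mathcal{W}_i\setminus\{j\}$ is a uniform $(\psi_i-1)$-subset of $\mathcal{N}\setminus\{j\}$ while $\mathcal{W}_k$ is a uniform $\psi_k$-subset of the same $(N-1)$-element ground set. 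By the standard hypergeometric mean, the expected intersection size is then $\psi_k(\psi_i-1)/(N-1)$.

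I would then plug these pieces back into $\mathds{E}[\Delta_{ij}]$, obtaining
$\mathds{E}[\Delta_{ij}] = \sum_{k\neq i}\bigl[\psi_k/N + (\varrho_k/N)\bigl(\psi_k - \psi_k(\psi_i-1)/(N-1)\bigr)\bigr]$. The inner parenthesis simplifies via $\psi_k(1-(\psi_i-1)/(N-1))=\psi_k(N-\psi_i)/(N-1)=\psi_k\varrho_i/(N-1)$, using $\varrho_i=N-\psi_i$. Factoring $\psi_k/N$ out of each summand leaves exactly $(\psi_k/N)\bigl(1+\varrho_k\varrho_i/(N-1)\bigr)$, which is the claimed formula. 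Because this expected value does not depend on which $j\in\mathcal{W}_i$ is chosen (the randomness and conditioning are symmetric in $j$), dropping the $j$ subscript is legitimate and the notation $\mathds{E}[\Delta_i]$ is well defined.

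The only subtlety I anticipate is the conditioning step: once we condition on $j\in\mathcal{H}_k$, it is essential to recognise that $|\mathcal{W}_k\cap\mathcal{W}_i|$ ranges over $\mathcal{N}\setminus\{j\}$ only, so that $\mathcal{W}_i$ contributes $\psi_i-1$ elements (not $\psi_i$) when applying the hypergeometric mean. Overlooking this shift would produce a term in $\psi_i$ rather than $\varrho_i$ and break the collapse to the stated form. Beyond that, the argument is routine algebra driven by independence and uniformity.
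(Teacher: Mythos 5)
Your proposal is correct and follows essentially the same route as the paper: both start from the exact degree expression $\Delta_{ij}$, take expectations of the indicators, and evaluate $\mathds{E}\left[I_{j\in\mathcal{H}_k}\cdot\left|\mathcal{W}_k\cap\mathcal{W}_i\right|\right]$ by conditioning on $j\in\mathcal{H}_k$ and recognizing the intersection over the reduced $(N-1)$-element ground set as hypergeometric with mean $\psi_k(\psi_i-1)/(N-1)$, before the identical algebraic collapse via $\varrho_i = N-\psi_i$. The only cosmetic difference is that you invoke the standard hypergeometric mean directly where the paper writes out the pmf sum explicitly.
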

We will start our proof using \eqref{eq:sigma-delta}. When we ignore the contents of the different sets, we can derive an expression for the expected edge set size of the graph as follows:
\begin{align} \label{eq:expected-edge-set-size-proof-I}
\mathds{E}\left[\left|\mathcal{E}\right|\right] &= \frac{1}{2}\sum_{i=1}^M \mathds{E}\left[\Sigma \Delta_i\right] \ignore{\nonumber\\
&} = \frac{1}{2}\sum_{i=1}^M \sum_{\substack{k=1\\k\neq i}}^M \mathds{E}\left[\left|\mathcal{W}_i \cap \mathcal{W}_k\right|\right]  + \psi_k \mathds{E}\left[\sum_{j\in\mathcal{W}_i}I_{j\in\mathcal{H}_k}\right] \ignore{\nonumber \\
&} - \mathds{E}\left[\sum_{j\in\mathcal{W}_i}I_{j\in\mathcal{H}_k}\cdot\left|\mathcal{W}_k \cap \mathcal{W}_i\right|\right] \nonumber\\
& = \frac{1}{2} \sum_{i=1}^M \sum_{\substack{k=1\\k\neq i}}^M \frac{\psi_i\psi_k}{N}  + \frac{\psi_k \psi_i\varrho_k}{N} - \sum_{j\in\mathcal{W}_i} \mathds{E}\left[I_{j\in\mathcal{H}_k} \cdot \left|\mathcal{W}_k \cap \mathcal{W}_i\right|\right]
\end{align}
Note that the indicator function in the last term can be only zero or one. Consequently, the expectation of its multiplication with $\left|\mathcal{W}_k \cap \mathcal{W}_i\right|$ can be only evaluated for $I_{j\in\mathcal{H}_k} = 1$. In this case, packet $j$ cannot be in the intersection of $\mathcal{W}_k$ and $\mathcal{W}_i$. Consequently, this intersection is possible only with the other $\psi_i-1$ packets of receiver $i$, and from the set of the remaining $N-1$ packets. Since the cardinality of the intersection of two sets of given sizes, whose elements are drawn of the same pool of $N-1$ elements, is a hypergeometric random variable, we have:
\begin{align} \label{eq:expected-edge-set-size-proof-II}
& \mathds{E}\left[I_{j\in\mathcal{H}_k} \cdot \left|\mathcal{W}_k \cap \mathcal{W}_i\right|\right] = \sum_{n = 1}^{N-1}
n \: \mathds{P}\left[I_{j\in\mathcal{H}_k} = 1, \left|\mathcal{W}_k \cap \mathcal{W}_i\right| = n \right] \nonumber\\
&  = \sum_{n = 1}^{N-1}
n \: \mathds{P}\left[\left|\mathcal{W}_k \cap \mathcal{W}_i\right| = n \Big| I_{j\in\mathcal{H}_k} = 1\right] \cdot \mathds{P}\left[I_{j\in\mathcal{H}_k} = 1\right] \ignore{ \nonumber \\
&} = \sum_{n=1}^{N-1} n \frac{\binom{\psi_i-1}{n}\binom{N-1-\psi_i+1}{\psi_k - n}}{\binom{N-1}{\psi_k}} \:\frac{\varrho_k}{N} \ignore{\nonumber \\
&} = \frac{\varrho_k \psi_k \left(\psi_i - 1\right)}{N(N-1)}\;.
\end{align}
Substituting \eqref{eq:expected-edge-set-size-proof-II} in \eqref{eq:expected-edge-set-size-proof-I}, we get:
\begin{align}
\mathds{E}\left[\left|\mathcal{E}\right|\right] &= \frac{1}{2} \sum_{i=1}^M \sum_{\substack{k=1\\k\neq i}}^M \frac{\psi_i\psi_k}{N}  + \frac{\psi_k \psi_i\varrho_k}{N} - \frac{\psi_i \varrho_k \psi_k(\psi_i-1)}{N(N-1)} \ignore{\nonumber\\
&} = \frac{1}{2} \sum_{i=1}^M \psi_i \left\{ \sum_{\substack{k=1\\k\neq i}}^M \frac{\psi_k}{N} \left[ 1 + \varrho_k \left(1 - \frac{\psi_i-1}{N-1}\right)\right]\right\} \nonumber\\
& = \frac{1}{2} \sum_{i=1}^M \psi_i \left\{\sum_{\substack{k=1\\k\neq i}}^M \frac{\psi_k}{N} \left(1 + \frac{\varrho_k \varrho_i}{N-1}\right)\right\} = \frac{1}{2} \sum_{i=1}^M \psi_i \mathds{E}\left[\Delta_i\right]\;.
\end{align}

\section{Proof of \thref{th:expected-edge-set-size-evolution}} \label{app:expected-edge-set-size-evolution}
To prove this theorem, we first need to introduce the following lemma, which is proved in  \appref{app:degree-evolution}.
\begin{lemma} \label{lem:degree-evolution}
For a given maximal clique $\kappa$, chosen for transmission at time $t$, the expected degree of a receiver $i$ vertex at time $t+1$ is expressed, for $i\in\mathcal{T}$ and $i\notin\mathcal{T}$, as:
\ignore{
\begin{align}
\mathds{E}\left[\Delta_{i\in\mathcal{T}}^{(t+1)}\right]  = & \mathds{E}\left[\Delta_i^{(t)}\right] + \sum_{\substack{k=1\\k\neq i}}^M q_i \xi_k\ignore{\nonumber \\
&} -\sum_{\substack{k\in\mathcal{T}\\k\neq i}} \Phi_k(q_i) \label{eq:degree-evolution-1}
\end{align}
if $i\in\mathcal{T}$ and is expressed as:
\begin{equation}
\mathds{E}\left[\Delta_{i\notin\mathcal{T}}^{(t+1)}\right]  = \mathds{E}\left[\Delta_i^{(t)}\right] -\sum_{k\in\mathcal{T}} \Phi_k(0) \label{eq:degree-evolution-2}
\end{equation}
if $i\notin\mathcal{T}$.
}
\begin{align}
\mathds{E}\left[\Delta_{i\in\mathcal{T}}^{(t+1)}\right]  = & \mathds{E}\left[\Delta_i^{(t)}\right] + \sum_{\substack{k=1\\k\neq i}}^M q_i \xi_k -\sum_{\substack{k\in\mathcal{T}\\k\neq i}} \Phi_k(q_i)\ \label{eq:degree-evolution-1}\\
\mathds{E}\left[\Delta_{i\notin\mathcal{T}}^{(t+1)}\right]  = & \mathds{E}\left[\Delta_i^{(t)}\right] -\sum_{k\in\mathcal{T}} \Phi_k(0) \label{eq:degree-evolution-2}
\end{align}
\end{lemma}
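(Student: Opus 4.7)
The plan is to apply Lemma 1 at time $t+1$ with updated (random) cardinalities, and then average over the reception indicators. Concretely, conditional on the reception vector $X = (X_1,\dots,X_M)$, the cardinalities evolve as $\psi_k^{(t+1)} = \psi_k - X_k$ and $\varrho_k^{(t+1)} = \varrho_k + X_k$ for $k\in\mathcal{T}$, while they remain unchanged for $k\notin\mathcal{T}$. Treating the post-transmission sets as random subsets of these new cardinalities, Lemma 1 gives
\begin{equation*}
\mathds{E}\!\left[\Delta_i^{(t+1)} \,\big|\, X\right] = \sum_{\substack{k=1\\k\neq i}}^M \frac{\psi_k^{(t+1)}}{N}\left(1 + \frac{\varrho_k^{(t+1)}\varrho_i^{(t+1)}}{N-1}\right),
\end{equation*}
so $\mathds{E}[\Delta_i^{(t+1)}]$ is obtained by splitting the sum over $k$ according to whether $k\in\mathcal{T}$ and using the tower property together with the independence of the $X_h$'s (distinct receivers see independent channels) and $X_h^2 = X_h$.

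First I would handle the case $i\in\mathcal{T}$, for which $\varrho_i^{(t+1)} = \varrho_i + X_i$. For $k\notin\mathcal{T}$ only $\varrho_i$ is random, so the $k$-th term evaluates to $\tfrac{\psi_k}{N}\bigl(1 + \tfrac{\varrho_k(\varrho_i+q_i)}{N-1}\bigr)$, which differs from the corresponding term of $\mathds{E}[\Delta_i^{(t)}]$ by exactly $q_i\xi_k$. For $k\in\mathcal{T}$ with $k\neq i$, I would expand
\begin{equation*}
\mathds{E}\!\left[\frac{\psi_k - X_k}{N}\left(1 + \frac{(\varrho_k + X_k)(\varrho_i + X_i)}{N-1}\right)\right]
\end{equation*}
using $\mathds{E}[X_h] = q_h$, $\mathds{E}[X_iX_k] = q_iq_k$ and $X_k^2 = X_k$, then subtract $\tfrac{\psi_k}{N}\bigl(1 + \tfrac{\varrho_k\varrho_i}{N-1}\bigr)$; after cancelling the leading $\psi_k\varrho_k\varrho_i/(N(N-1))$ and collecting the remaining terms, the difference rearranges into $q_i\xi_k - \tfrac{q_k}{N}\bigl(1+\tfrac{(\varrho_k-\psi_k+1)(\varrho_i+q_i)}{N-1}\bigr) = q_i\xi_k - \Phi_k(q_i)$. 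Summing these per-term differences over all $k\neq i$ yields \eqref{eq:degree-evolution-1}.

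The case $i\notin\mathcal{T}$ is simpler because $\varrho_i^{(t+1)}=\varrho_i$ is deterministic. Terms with $k\notin\mathcal{T}$ contribute zero change, while for $k\in\mathcal{T}$ the same algebra as above, but with $X_i$ replaced by $0$ throughout, gives a per-term change of $-\tfrac{q_k}{N}\bigl(1+\tfrac{(\varrho_k-\psi_k+1)\varrho_i}{N-1}\bigr) = -\Phi_k(0)$. Summing over $k\in\mathcal{T}$ produces \eqref{eq:degree-evolution-2}.

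The main obstacle is the bookkeeping in the $k,i\in\mathcal{T}$ sub-case: three random factors multiplied together produce cross terms such as $X_iX_k^2 = X_iX_k$ and $\psi_kX_iX_k$ that must all be consolidated and then reorganized into the compact forms $q_i\xi_k$ and $\Phi_k(q_i)$. The only substantive probabilistic input is the independence of the reception indicators across receivers, so once that is invoked the proof reduces to verifying that this reorganization indeed reproduces the stated formulas.
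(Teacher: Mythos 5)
Your proposal is correct and follows essentially the same route as the paper's proof: condition on the reception vector $\mathbf{X}$, re-apply Lemma~\ref{lem:expected-degree} with the shifted cardinalities $\psi_k - X_k$ and $\varrho_k + X_k$ for targeted receivers, and then average over $\mathbf{X}$ using independence of the indicators and $X_k^2 = X_k$ to recover the $q_i\xi_k$ and $\Phi_k(\cdot)$ terms. The only difference is cosmetic — you organize the computation as per-term differences over $k$ rather than first collecting everything into a single subtracted sum — and both case splits ($i\in\mathcal{T}$ versus $i\notin\mathcal{T}$, handled by setting $X_i=0$) match the paper exactly.
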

When the maximal clique $\kappa$ is chosen for transmission at time $t$, the receivers in $\mathcal{T}$ is targeted with the coded packet but are not guaranteed to receive that packet. From \lref{th:expected-edge-set-size}, we can derive the expression of the expected edge set size at time $t+1$, conditioned on the random vector $\mathbf{X}$ defined in \appref{app:degree-evolution}, as follows:
\begin{equation}
\mathds{E}\left[\left|\mathcal{E}^{(t+1)}\right|\Big|\mathbf{X}\right] =  \frac{1}{2} \sum_{i\in\mathcal{T}} (\psi_i - X_i)\; \mathds{E}\left[\Delta_{i\in\mathcal{T}}^{(t+1)}\Big|\mathbf{X}\right] \ignore{\nonumber\\
&} + \frac{1}{2} \sum_{i\notin\mathcal{T}} \psi_i \; \mathds{E}\left[\Delta_{i\notin\mathcal{T}}^{(t+1)}\Big|\mathbf{X}\right] \;.
\end{equation}
Now, taking the expectation operator over the random vector $\mathbf{X}$, we can get the expression for the expected edge set size at time $t+1$ as follows:
\begin{align} \label{eq:expected-edge-set-size-evolution-proof-0}
&\mathds{E}\left[\left|\mathcal{E}^{(t+1)}\right|\right] = \mathds{E}_{\mathbf{X}} \left\{ \mathds{E}\left[\left|\mathcal{E}^{(t+1)}\right|\Big|\mathbf{X}\right]\right\} \nonumber \\
& =  \frac{1}{2} \sum_{i\in\mathcal{T}} \psi_i\; \mathds{E}_{\mathbf{X}}\left\{\mathds{E}\left[\Delta_{i\in\mathcal{T}}^{(t+1)}\Big|\mathbf{X}\right]\right\} \ignore{\nonumber\\
&} - \frac{1}{2}\sum_{i\in\mathcal{T}} \mathds{E}_{\mathbf{X}}\left\{X_i \; \mathds{E}\left[\Delta_{i\in\mathcal{T}}^{(t+1)}\Big|\mathbf{X}\right]\right\} \ignore{\nonumber\\
&}+ \frac{1}{2} \sum_{i\notin\mathcal{T}} \psi_i \; \mathds{E}_{\mathbf{X}}\left\{\mathds{E}\left[\Delta_{i\notin\mathcal{T}}^{(t+1)}\Big|\mathbf{X}\right]\right\}\ignore{\;.}
\end{align}
From \eqref{eq:degree-evolution-proof-I}, we know that:
\begin{equation} \label{eq:expected-edge-set-size-evolution-proof-I}
\mathds{E}_{\mathbf{X}}\left\{\mathds{E}\left[\Delta_{i\in\mathcal{T}}^{(t+1)}\Big|\mathbf{X}\right]\right\} = \mathds{E}\left[\Delta_{i\in\mathcal{T}}^{(t+1)}\right] \ignore{\nonumber\\
&} = \mathds{E}\left[\Delta_i^{(t)}\right] + \sum_{\substack{k=1\\k\neq i}}^M q_i \xi_k\ignore{\nonumber \\
&} -\sum_{\substack{k\in\mathcal{T}\\k\neq i}} \Phi_k(q_i) \ignore{\nonumber\\
&} = \mathds{E}\left[\Delta_i^{(t)}\right] + \alpha_i\;.
\end{equation}
Similarly, we know that\ignore{ from \eqref{eq:degree-evolution-proof-II} that}:
\begin{equation} \label{eq:expected-edge-set-size-evolution-proof-II}
\mathds{E}_{\mathbf{X}}\left\{\mathds{E}\left[\Delta_{i\notin\mathcal{T}}^{(t+1)}\Big|\mathbf{X}\right]\right\} = \mathds{E}\left[\Delta_{i\notin\mathcal{T}}^{(t+1)}\right]\ignore{\nonumber\\
& }= \mathds{E}\left[\Delta_i^{(t)}\right] -\sum_{\substack{k\in\mathcal{T}\\k\neq i}} \Phi_k(0) \ignore{\nonumber\\
&} = \mathds{E}\left[\Delta_i^{(t)}\right] + \beta_i\;.
\end{equation}
From \eqref{eq:degree-evolution-proof-I}, we can compute $\mathds{E}_{\mathbf{X}}\left\{X_i \; \mathds{E}\left[\Delta_{i\in\mathcal{T}}^{(t+1)}\Big|\mathbf{X}\right]\right\}$ as follows:
\begin{align}\label{eq:expected-edge-set-size-evolution-proof-III}
\mathds{E}_{\mathbf{X}}\left\{X_i \; \mathds{E}\left[\Delta_{i\in\mathcal{T}}^{(t+1)}\Big|\mathbf{X}\right]\right\}
= \;& \mathds{E}_{\mathbf{X}}\left\{X_i\right\}\mathds{E}\left[\Delta_i^{(t)}\right] \ignore{\nonumber \\
&} + \sum_{\substack{k=1\\k\neq i}}^M  \frac{\psi_k\varrho_k \mathds{E}_{\mathbf{X}}\left\{X_i^2\right\}}{N(N-1)} \nonumber\\
& - \sum_{\substack{k\in\mathcal{T}\\k\neq i}}  \mathds{E}_{\mathbf{X}}\left\{\frac{X_i X_k}{N} \ignore{\nonumber\\
& \qquad\qquad\qquad~} + \frac{\left(X_k\left(\varrho_k - \psi_k\right) + X_k^2\right)\left(\varrho_i X_i + X^2_i\right)}{N(N-1)}\right\}
\end{align}
Using the definitions of $\xi_k$, $\Phi_k(x)$ and $\gamma_i$ in \eqref{eq:xi}\ignore{, \eqref{eq:phi}} and \eqref{eq:gamma}, respectively, we get:
\begin{equation}
\mathds{E}_{\mathbf{X}}\left\{X_i \; \mathds{E}\left[\Delta_{i\in\mathcal{T}}^{(t+1)}\Big|\mathbf{X}\right]\right\}  = q_i \mathds{E}\left[\Delta_i^{(t)}\right] + \sum_{\substack{k=1\\k\neq i}}^M q_i\xi_k  \ignore{\nonumber \\
&}-\sum_{\substack{k\in\mathcal{T}\\k\neq i}}q_i\Phi_k(1) \ignore{\nonumber\\
 = \;&} = q_i \left(\mathds{E}\left[\Delta_i^{(t)}\right] + \gamma_i\right)\;.
\end{equation}
Finally, we know that:
\begin{equation} \label{eq:expected-edge-set-size-evolution-proof-IV}
\frac{1}{2}\sum_{i\in\mathcal{T}} \psi_i \mathds{E}\left[\Delta_i^{(t)}\right] + \frac{1}{2}\sum_{i\notin\mathcal{T}} \psi_i \mathds{E}\left[\Delta_i^{(t)}\right] = \mathds{E}\left[\left|\mathcal{E}^{(t)}\right|\right]
\end{equation}
The theorem follows by substituting \eqref{eq:expected-edge-set-size-evolution-proof-I}, \eqref{eq:expected-edge-set-size-evolution-proof-II} and \eqref{eq:expected-edge-set-size-evolution-proof-III} in \eqref{eq:expected-edge-set-size-evolution-proof-0}, re-arranging the terms and finally substituting \eqref{eq:expected-edge-set-size-evolution-proof-IV} in the re-arranged equation.

\section{Proof of \thref{th:degree-comparison}} \label{app:degree-comparison}
Note that $\psi_i > \psi_{h}$ implicitly means that $\varrho_h < \varrho_i$. From \lref{lem:expected-degree} in \appref{app:graph-density}, we have:
\begin{equation} \label{eq:degree-comparison}
\mathds{E}\left[\Delta_h\right] = \sum_{\substack{k = 1\\k\neq i,h}}^M\: \frac{\psi_k}{N} \left(1+\frac{\varrho_k\varrho_h}{N-1}\right) + \frac{\psi_i}{N} \left(1+\frac{\varrho_i\varrho_h}{N-1}\right)\ignore{ \nonumber \\
> &} > \sum_{\substack{k = 1\\k\neq i,h}}^M\: \frac{\psi_k}{N} \left(1+\frac{\varrho_k\varrho_i}{N-1}\right) \ignore{\nonumber \\
&}+ \frac{\psi_h}{N} \left(1+\frac{\varrho_i\varrho_h}{N-1}\right) \ignore{\nonumber \\
 =&} = \mathds{E}\left[\Delta_i\right]\;.
\end{equation}

\section{Proof of \thref{th:alpha-beta-gamma}} \label{app:alpha-beta-gamma}
From \ignore{Equations \eqref{eq:alpha}, \eqref{eq:beta},}\eqref{eq:gamma}\ignore{, \eqref{eq:phi},} and \eqref{eq:xi}, we have:
\begin{align}
\alpha_i - \frac{q_i\gamma_i}{\psi_i} = &  \sum_{\substack{k=1\\k\neq i}}^M  \frac{q_i\psi_k \varrho_k}{N(N-1)} \ignore{\nonumber\\
&} -\sum_{\substack{k\in\mathcal{T}\\k\neq i}}\frac{q_k}{N} \left( 1 + \frac{\left(\varrho_k-\psi_k+1\right) \left(\varrho_i+q_i\right)}{N-1}\right) \ignore{\nonumber \\
&} - \sum_{\substack{k=1\\k\neq i}}^M \frac{q_i\psi_k \varrho_k}{N\psi_i(N-1)} \nonumber\\
& + \sum_{\substack{k\in\mathcal{T}\\k\neq i}} \frac{q_iq_k}{N\psi_i} \left( 1 + \frac{\left(\varrho_k-\psi_k+1\right) \left(\varrho_i+1\right)}{N-1}\right) \nonumber \\ \ignore{\;.
\end{align}
Re-arranging the above expression and using the definition of $\beta_i$ in \ignore{\eqref{eq:beta}}\eqref{eq:gamma}, we get:
\begin{align}
\alpha_i - \frac{q_i\gamma_i}{\psi_i} }=& \; \beta_i \ignore{\nonumber \\
&} + \sum_{\substack{k\notin\mathcal{T}\\k\neq i}}  \frac{q_i\psi_k \varrho_k}{N(N-1)}\left(1-\frac{1}{\psi_i}\right) \ignore{\nonumber\\
&} +\sum_{\substack{k\in\mathcal{T}\\k\neq i}}\frac{q_i\psi_k\varrho_k \left(1-\frac{1}{\psi_i}\right)- q_iq_k\left(\varrho_k-\psi_k+1\right)}{N(N-1)} \nonumber \\
& + \sum_{\substack{k\in\mathcal{T}\\k\neq i}} \frac{q_iq_k}{N\psi_i} \left( 1 + \frac{\left(\varrho_k-\psi_k+1\right) \left(\varrho_i+1\right)}{N-1}\right) \;.\label{eq:main_I}
\end{align}
Since $1-\frac{1}{\psi_i} \geq 0$, as long as $\psi_i>0$, then:
\begin{equation} \label{eq:cond_I}
\sum_{\substack{k\notin\mathcal{T}\\k\neq i}}  \frac{q_i\psi_k \varrho_k}{N(N-1)}\left(1-\frac{1}{\psi_i}\right) \geq 0\;.
\end{equation}
Since $\varrho_k \geq \psi_k~\forall~k\in\mathcal{T},k\neq i$, we get:
\begin{equation}\label{eq:cond_II}
\sum_{\substack{k\in\mathcal{T}\\k\neq i}} \frac{q_iq_k}{N\psi_i} \left( 1 + \frac{\left(\varrho_k-\psi_k+1\right) \left(\varrho_i+1\right)}{N-1}\right) \geq 0\;.
\end{equation}
For $\psi_i>1$ and $\psi_k>1,~\forall~k\in\mathcal{T}\setminus i$, we have $\varrho_k \geq \varrho_k - \psi_k +1$ and $\psi_k\left(1-\frac{1}{\psi_i}\right) \geq q_k$. Consequently\ignore{, we have}:
\begin{equation} \label{eq:cond_III}
\sum_{\substack{k\in\mathcal{T}\\k\neq i}}\frac{q_i\psi_k\varrho_k \left(1-\frac{1}{\psi_i}\right)- q_iq_k\left(\varrho_k-\psi_k+1\right)}{N(N-1)} \geq 0
\end{equation}
Note that this expression becomes much greater than zero, when all $\psi_k\in\mathcal{T},k\neq i$ are much greater than one (i.e. when all targeted vertices other than $i$ have the largest Wants sets). Finally, in case $\psi_i = 1$ and $\psi_k>1,~\forall~k\in\mathcal{T}\setminus i$, we have the third summation in \eqref{eq:main_I} greater than the negative term in the second summation in \eqref{eq:main_I}. The theorem follows from all these inequalities\ignore{in \eqref{eq:cond_I}, \eqref{eq:cond_II} and \eqref{eq:cond_III}} and \eqref{eq:main_I}.

\section{Proof of \lref{lem:expected-degree}} \label{app:expected-degree}
We will start this proof from the exact vertex degree expression in \eqref{eq:vertex-degree}. Ignoring the content of the different sets, we can derive the expression for the expected degree of a vertex of receiver $i$ as follows:
\begin{align} \label{eq:lem-I-proof-I}
\mathds{E}\left[\Delta_i\right] & = \mathds{E}\left[\Delta_{ij}\right] \ignore{\nonumber\\
&}= \sum_{\substack{k=1\\k\neq i}}^M \mathds{E}\left[I_{j\in\mathcal{W}_k}\right] + \mathds{E}\left[I_{j\in\mathcal{H}_k}\right]\left|\mathcal{W}_k\right| - \mathds{E}\left[I_{j\in\mathcal{H}_k} \cdot \left|\mathcal{W}_k \cap \mathcal{W}_i\right|\right] \nonumber\\
&= \sum_{\substack{k=1\\k\neq i}}^M \frac{\psi_k}{N} + \frac{\varrho_k \psi_k}{N} - \mathds{E}\left[I_{j\in\mathcal{H}_k} \cdot \left|\mathcal{W}_k \cap \mathcal{W}_i\right|\right] \;.
\end{align}

Substituting \eqref{eq:expected-edge-set-size-proof-II} in \eqref{eq:lem-I-proof-I}, we get:
\begin{equation}
\mathds{E}\left[\Delta_i\right] = \sum_{\substack{k=1\\k\neq i}}^M \frac{\psi_k}{N} \left[ 1 + \varrho_k \left(1 - \frac{\psi_i-1}{N-1}\right)\right] \ignore{\nonumber \\
&} = \sum_{\substack{k=1\\k\neq i}}^M \frac{\psi_k}{N} \left( 1 + \frac{\varrho_k \varrho_i}{N-1}\right)\;.
\end{equation}

\section{Proof of \lref{lem:degree-evolution}} \label{app:degree-evolution}
When the maximal clique $\kappa$ is chosen for transmission at time $t$, each member $k$ of the targeted receiver set $\mathcal{T}$ may receive the coded packet with probability $q_k$. Let $X_k$ be the random variable representing the reception of receiver $k\in\mathcal{T}$ at time $t$ and $\mathbf{X}$ as the random vector of all such random variables. From \lref{lem:expected-degree}, we can derive the expression of the expected degree of receiver $i\in\mathcal{T}$ at time $t+1$, conditioned on the random vector $\mathbf{X}$, as follows:
\begin{align}
\mathds{E}\left[\Delta_{i\in\mathcal{T}}^{(t+1)}\Big|\mathbf{X}\right]
&= \sum_{\substack{k\in\mathcal{T}\\k\neq i}} \frac{\psi_k - X_k}{N}\left(1 + \frac{\left(\varrho_k+X_k\right)\left(\varrho_i + X_i\right)}{N-1}\right) \ignore{\nonumber\\
&} + \sum_{k\notin\mathcal{T}} \frac{\psi_k}{N}\left(1 + \frac{\varrho_k\left(\varrho_i + X_i\right)}{N-1}\right) \label{eq:reference-I}\\
& = \ignore{\sum_{\substack{k=1\\k\neq i}}^M \frac{\psi_k}{N}\left(1 + \frac{\varrho_k \varrho_i}{N-1}\right)}
 \mathds{E}\left[\Delta_i^{(t)}\right]+ \sum_{\substack{k=1\\k\neq i}}^M  \frac{\psi_k\varrho_k X_i}{N(N-1)} \ignore{ \nonumber\\
&} - \sum_{\substack{k\in\mathcal{T}\\k\neq i}} \frac{X_k}{N}\left(1 + \frac{\left(\varrho_k -\psi_k + X_k\right)\left(\varrho_i + X_i\right)}{N-1}\right)\label{eq:reference-II}\;.
\end{align}
\ignore{The first term in \eqref{eq:reference-II} is obviously the expected vertex degree of receiver $i$ at time $t$.} Now, we can derive the expected degree of receiver $i$ after serving the maximal clique $\kappa$ as follows:
\begin{align}\label{eq:degree-evolution-proof-I}
\mathds{E}\left[\Delta_{i\in\mathcal{T}}^{(t+1)}\right]  &= \mathds{E}_{\mathbf{X}}\left\{\mathds{E}\left[\Delta_{i\in\mathcal{T}}^{(t+1)}\Big|\mathbf{X}\right]\right\} \nonumber\\
& = \mathds{E}\left[\Delta_i^{(t)}\right] + \sum_{\substack{k=1\\k\neq i}}^M  \frac{\psi_k\varrho_k \mathds{E}_{\mathbf{X}}\left\{X_i\right\}}{N(N-1)} \ignore{\nonumber\\
& } - \sum_{\substack{k\in\mathcal{T}\\k\neq i}} \mathds{E}_{\mathbf{X}}\left\{\frac{X_k}{N} + \frac{\left(X_k\left(\varrho_k - \psi_k\right) + X_k^2\right)\left(\varrho_i + X_i\right)}{N(N-1)}\right\} \;.
\end{align}
Using the definition of $\xi_k$ and $\Phi_k(x)$ in \eqref{eq:xi}\ignore{and \eqref{eq:phi}}, respectively, we get:
\begin{equation}
\mathds{E}\left[\Delta_{i\in\mathcal{T}}^{(t+1)}\right]  = \mathds{E}\left[\Delta_i^{(t)}\right] + \sum_{\substack{k=1\\k\neq i}}^M q_i \xi_k -\sum_{\substack{k\in\mathcal{T}\\k\neq i}} \Phi_k(q_i)\;.
\end{equation}
The expression for $\mathds{E}\left[\Delta_{i\notin\mathcal{T}}^{(t+1)}\right]$ can be similarly derived using a similar approach. \ignore{as follows. The expression of $\mathds{E}\left[\Delta_{i\notin\mathcal{T}}^{(t+1)}\Big|\mathbf{X}\right]$ is the same as \eqref{eq:reference-I} by setting $X_i = 0$ as receiver $i$ is not targeted. Thus, we have:
\begin{equation}
\mathds{E}\left[\Delta_{i\notin\mathcal{T}}^{(t+1)}\Big|\mathbf{X}\right] =\ignore{\sum_{\substack{k=1\\k\neq i}}^M \frac{\psi_k}{N}\left(1 + \frac{\varrho_k \varrho_i}{N-1}\right)}
\mathds{E}\left[\Delta_i^{(t)}\right]\ignore{\nonumber\\
&}  - \sum_{\substack{k\in\mathcal{T}\\k\neq i}} \frac{X_k}{N}\left(1 + \frac{\left(\varrho_k -\psi_k + X_k\right)\varrho_i}{N-1}\right)\label{eq:reference-III}\;.
\end{equation}
Now, taking the expectation over $\mathbf{X}$, we get:
\begin{align}\label{eq:degree-evolution-proof-II}
\mathds{E}\left[\Delta_{i\notin\mathcal{T}}^{(t+1)}\right]  & = \mathds{E}_{\mathbf{X}}\left\{\mathds{E}\left[\Delta_{i\notin\mathcal{T}}^{(t+1)}\Big|\mathbf{X}\right]\right\} \ignore{\nonumber\\
&} = \mathds{E}\left[\Delta_i^{(t)}\right]
- \sum_{\substack{k\in\mathcal{T}\\k\neq i}} \mathds{E}_{\mathbf{X}}\left\{\frac{X_k}{N} + \frac{\left(X_k\left(\varrho_k - \psi_k\right) + X_k^2\right)\varrho_i}{N(N-1)}\right\} \nonumber\\
& = \mathds{E}\left[\Delta_i^{(t)}\right] -\sum_{\substack{k\in\mathcal{T}\\k\neq i}} \Phi_k(0)\;.
\end{align}
}
\ignore{
\section{Glossary}

\begin{table}[h]
\centering
\caption{List of Mathematical Notation}
\begin{tabular}{ll}
$\mathcal{M}$ & Set of of all receivers \\
$M$ & Number of receivers \\
$\mathcal{N}$ & Set of of all packets \\
$N$ & Number of packets \\
$q_i$ & Packet reception success probability at receiver $i$. \\
$\mathcal{H}_i$ & Has set of receiver $i$ \\
$\mathcal{W}_i$ & Wants set of receiver $i$\\
$\varrho_i$ & Number of packets received by receiver $i$ (i.e. $|\mathcal{H}_i|$) \\
$\psi_i$ & Number of packets wanted by receiver $i$ (i.e. $|\mathcal{W}_i|$)\\
$\mathcal{G}(\mathcal{V},\mathcal{E})$ & IDNC Graph $\mathcal{G}$ with vertex set $\mathcal{V}$ and edge set $\mathcal{E}$\\
$\rho_c(\mathcal{G})$ & Coding density of the IDNC graph $\mathcal{G}$\\
$\psi_{ik}$ & Number of source packets that are wanted by both receivers $i$ and $k$ (i.e. $|\mathcal{W}_i\cap \mathcal{W}_k|$) \\
$\theta_{ik}$ & Number of source packets that wanted by $i$ but not wanted by $k$ (i.e. $|\mathcal{W}_i\setminus \mathcal{W}_k| = \psi_i - \psi_{ik}$) \\
$\theta_{ki}$ & Number of source packets that wanted by $k$ but not wanted by $i$ (i.e. $|\mathcal{W}_k\setminus \mathcal{W}_i|= \psi_k - \psi_{ik}$) \\
$\kappa$ & Clique chosen to be served at time $t$ \\
$\mathcal{T}$ & Set of targeted receiver of the chosen transmission at time $t$ \\
$Y_{ik}$ & Number of pairwise edges between all the vertices of receivers $i$ and $k$ in the IDNC graph\\
$\Omega_{j}$ & Number of receivers that want source packet $j$\\
$\mathds{E}\left[\Delta_i\right]$ & Expected degree of a vertex of receiver $i$\\
$\epsilon$ & Average erasure probability \\
$\epsilon_w$ & Worst erasure probability\\
\quad & \quad \\
\end{tabular}
\end{table}

\begin{table}[h]
\centering
\caption{List of Acronyms}
\begin{tabular}{ll}
IDNC & Instantly decodable network coding \\
MoWPS & Most wanted packet serving strategy\\
WoRT & Worst receiver targeting strategy \\
RND & Random clique selection \\
MC & Maximum clique selection \\
MWC & Maximum weight clique selection \\
MWC-R & MWC when the weight of each vertex is the reception probability of its receiver
\end{tabular}
\end{table}
}

\bibliographystyle{IEEEtran}
\bibliography{IEEEabrv,bibfile}

\ignore{
\newpage

\begin{table}
\centering
\caption{Evolution cases of the Wants sets of a pair of receivers when the sender transmits a packet combination $\kappa$ targeting a set of receivers $\mathcal{T}$}\label{tab:evolution-conditions}
\begin{tabular}{|c|c|c|c|c|c|}
  \hline
  \multicolumn{3}{|c|}{Cases} & \multicolumn{1}{|c|}{\multirow{2}{*}{$\psi_i$}} & \multicolumn{1}{|c|}{\multirow{2}{*}{$\psi_k$}} & \multicolumn{1}{|c|}{\multirow{2}{*}{$\psi_{ik}$}} \\ \cline{1-3}
  $i$ & $k$ & $p_i/p_k$ & \multicolumn{1}{|c|}{} & \multicolumn{1}{|c|}{} & \multicolumn{1}{|c|}{} \\ \hline \hline
  $\notin\mathcal{T}$ & $\notin\mathcal{T}$ & - & $\psi_i$ & $\psi_k$ & $\psi_{ik}$ \\ \hline
  $\notin\mathcal{T}$ & $\in\mathcal{T}$ & $p_k\notin\mathcal{W}_i$ & $\psi_i$ & $\psi_k - X_k$ & $\psi_{ik}$ \\ \hline
  $\notin\mathcal{T}$ & $\in\mathcal{T}$ & $p_k\in\mathcal{W}_i$ & $\psi_i$ & $\psi_k - X_k$ & $\psi_{ik} - X_k$ \\ \hline
  $\in\mathcal{T}$ & $\notin\mathcal{T}$ & $p_i\notin\mathcal{W}_k$ & $\psi_i - X_i$ & $\psi_k$ & $\psi_{ik}$ \\ \hline
  $\in\mathcal{T}$ & $\notin\mathcal{T}$ & $p_i\in\mathcal{W}_k$ & $\psi_i - X_i$ & $\psi_k$ & $\psi_{ik} - X_i$ \\ \hline
  $\in\mathcal{T}$ & $\in\mathcal{T}$ & $p_i\notin\mathcal{W}_k$ & $\psi_i - X_i$ & $\psi_k - X_k$ & $\psi_{ik}$ \\ \hline
  $\in\mathcal{T}$ & $\in\mathcal{T}$ & $p_i\in\mathcal{W}_k$ & $\psi_i - X_i$ & $\psi_k - X_k$ & $\psi_{ik} - X_{ik}$ \\ \hline
\end{tabular}
\end{table}
}

\ignore{
\begin{IEEEbiography}[{\includegraphics[width=1in,height=1.25in,clip,keepaspectratio]{samehsorour-bw}}]{Sameh Sorour} (S '98) received the B.Sc. and M.Sc. degrees in Electrical Engineering from Alexandria University, Egypt, in 2002 and 2006, respectively. In 2002, he joined the Department of Electrical Engineering, Alexandria
University, where he was a Teaching and Research Assistant for three years and was promoted to
Assistant Lecturer in 2006. He is currently working towards the Ph.D. degree at the Wireless and Internet
Research Laboratory (WIRLab), Department of Electrical and Computer Engineering, University of
Toronto, Canada. His research interests include opportunistic, random\ignore{ and instantly decodable} network coding applications in wireless networks, vehicular and high speed train networks, indoor localization, adaptive resource allocation, OFDMA, and wireless scheduling.
\end{IEEEbiography}

\begin{IEEEbiography}[{\includegraphics[width=1in,height=1.25in,clip,keepaspectratio]{valaee}}]{Shahrokh Valaee} (S '88, M '00, SM '02) holds the Nortel Institute Junior Chair of Communication Networks and is the director of the Wireless and Internet Research Laboratory (WIRLab), both in the Edward S. Rogers Sr. Department of Electrical and Computer Engineering, University of Toronto, Canada. Prof. Valaee was the Co-Chair for the Wireless Communications Symposium of IEEE GLOBECOM 2006, a Guest Editor for IEEE Wireless Communications Magazine, a Guest Editor for Wiley Journal on Wireless Communications and Mobile Computing, and a Guest Editor of EURASIP Journal on Advances in Signal
Processing. He is an Editor of IEEE Transactions on Wireless Communications and the TPC-Chair of
IEEE PIMRC 2011. His current research interests are in wireless vehicular and sensor networks,
location estimation and cellular networks.
\end{IEEEbiography}
}

\end{document}